\documentclass[11pt]{article}
\usepackage[totalwidth=13.0cm,totalheight=20.0cm]{geometry}
\usepackage{latexsym,amsthm,amsmath,amssymb,url}
\usepackage[ruled, linesnumbered]{algorithm2e}
\usepackage{enumerate}
\usepackage{tikz}

\usetikzlibrary{decorations.pathreplacing}

\newtheorem{lemma}{Lemma}

\newtheorem{definition}{Definition}
\newtheorem{theorem}{Theorem}

\usepackage{authblk}
%opening
\title{FPT algorithms for generalized feedback vertex set problems}
\author{Bin Sheng \thanks{Email: shengbinhello@nuaa.edu.cn}}
\affil{College of Computer Science and Technology, Nanjing University of Aeronautics and Astronautics, Collaborative Innovation Center of Novel Software Technology and Industrialization, Nanjing, Jiangsu, 211106, PR China}
\begin{document}

\maketitle

\begin{abstract}
%The graph class, $r$-pseudoforests, in which each component can be made into a forest by deleting at most $k$ edges, was raised by Philip et al. [MFCS 2015], generalizing the concept of forests. They study the problem of deleting minimum number of vertices to obtain an $r$-pseudoforest from the FPT perspective, which is a generalization of the well studied feedback vertex set problem. Philip et al. show that the problem is FPT parameterized with $k$, the number of vertices allowed to be deleted. In this paper, we provide improved FPT algorithm and kernelization results for the $r$-pseudoforest deletion problem.   We also give an FPT algorithm for the problem of $d$-quasi-forest deletion, which asks to delete minimum number of vertices to obtain a $d$-quasi-forest, in which each component admits a feedback vertex set of size at most $d$.

An $r$-pseudoforest is a graph in which each component can be made into a forest by deleting at most $r$ edges, and a $d$-quasi-forest is a graph in which each component can be made into a forest by deleting at most $d$ vertices.

In this paper, we study the parameterized tractability of deleting minimum number of vertices to obtain $r$-pseudoforest and $d$-quasi-forest, generalizing the well studied feedback vertex set problem. We first provide improved FPT algorithm and kernelization results for the $r$-pseudoforest deletion problem and then we show that the $d$-quasi-forest deletion problem is also FPT.

%\textcolor{red}{We can also try the two problems parameterized with number of high-degree-vertices.}
%\textcolor{red}{Maybe we can give a randomized algorithm by arguing that at least constant fraction of the edges are incident with the solution.}
\end{abstract}

\section{Preliminary}
The Feedback Vertex Set problem (FVS for short), which asks to delete minimum number of vertices from a given graph to make it acyclic, is one of the 21 NP-hard problems proved by Karp \cite{DBLPconf/coco}. It has important applications in  bio-computing, operating system and artificial intelligence and so on. The problem has attracted a lot of attention from the parameterized complexity community due to its importance. Both its undirected and directed version have been well studied \cite{Cao2015On,DBLP:journals/jcss/ChenFLLV08,DBLP:journals/talg/ChitnisCHM15,DBLP:journals/siamdm/CyganPPW13}.

Feedback vertex set problem has been proved to be fixed parameter tractable when parameterized with the solution size, that is, the number of vertices allowed to be deleted. For undirected feedback vertex set problem, the state of the art algorithm runs in time $O^*(3.460^k)$ in deterministic setting \cite{DBLP:conf/iwpec/IwataK19} and $O^*(3^k)$ in the randomized setting \cite{DBLP:journals/corr/abs-1906-12298}, here the $O^*$  notation hides polynomial factors in $n$.

Several classes of graphs that are nearly acyclic have been defined in the litarature. A graph $F$ is called an \textit{$r$-pseudoforest} if we can delete at most $r$ edges from each component in $F$ to get a forest. A \textit{pseudoforest} is a 1-pseudoforest.  A graph $F$ is called an \textit{almost $r$-forest} if we can delete $r$ edges from $F$ to get a forest.

As a generalization of feedback vertex set problem, Philip et al. \cite{DBLP:journals/siamdm/PhilipRS18} introduced the problem of deleting vertices to get a graph that is nearly a forest. Several results have been obtained in this line of research. In \cite{DBLP:journals/siamdm/PhilipRS18}, the authors gave a $O(c_{r}^{k}n^{O(1)})$ algorithm for $r$-pseudoforest deletion, which asks to delete at most $k$ vertices to get an $r$-pseudoforest. The $c_{r}$ here depends on $r$ doubly exponentially. They also gave a $7.56^kn^{O(1)}$ time algorithm for the problem of pseudoforest deletion.

Rai and Saraub \cite{DBLP:journals/tcs/RaiS18} gave a $O^{*}(5.0024^{(k+r)})$ algorithm for the Almost Forest Deletion problem, which asks to delete minimum vertices to get an almost $r$-forest. Lin et al. \cite{DBLP:journals/ipl/LinFWCFL18} gave an improved algorithm for this problem that runs in time $O^{*}(5^k4^r)$.
Bodlaender et al. \cite{DBLP:journals/dam/BodlaenderOO18} gave an improved algorithm for pseudoforest deletion running in time $O(3^knk^{O(1)})$.

%\begin{theorem} \cite{DBLP:journals/siamdm/PhilipRS18}\label{theorem1}$r$-{\sc pseudoforest deletion} can be solved in time $O(c_{r}^k(m+n)\log n)$ for an instance $(G,k)$ where the constant $c_r$ depends only on $r$.\end{theorem}

A $d$-quasi-forest is a graph in which each connected component admits a feedback vertex set of size at most $d$. This notion was raised by  Hols and Kratsch in \cite{DBLP:conf/iwpec/HolsK17} in which they show that the Vertex Cover problem admits a polynomial kernel when parameterized with distance to $d$-quasi-forest. They did not show how to obtain such a modulator to $d$-quasi-forest.

In this paper, we first give an algorithm for $r$-pseudoforest deletion that runs in time $(1+(2r+3)^{r+2})^{k+1}n^{O(1)}$, improving  the algorithm in \cite{DBLP:journals/siamdm/PhilipRS18}. We also give an FPT algorithm to obtain a minimum modulator to $d$-quasi-forest. To the author's acknowledgement, this is the first nontrivial FPT result for $d$-quasi-forest deletion.

\section{Notations and Terminology}
Here we give a brief list of the graph theory concepts used in this paper, for other notations and terminology, we refer readers to \cite{bollobas2013modern}.

For a graph $G=(V(G),E(G))$, $V(G)$ and $E(G)$ are called its vertex set and edge set respectively.
A non-empty graph $G$ is \textit{connected} if there is a path between any pair of vertices, otherwise, we call it \textit{disconnected}.

The \textit{multiplicity} of an edge is the number of copies it appears in the multigraph. An edge $uv$ is called a \textit{loop} if $u=v$. The \textit{degree} of a vertex is the number of edges incident with it. A \textit{forest} is a graph in which there is no cycle. A \textit{tree} is a connected forest.  A graph $H=(V(H),E(H))$ is a \textit{subgraph} of a graph $G=(V(G),E(G))$, if $V(H)\subseteq V(G)$ and $E(H)\subseteq E(G)$. A subgraph $H$ of $G$ is called an \textit{induced subgraph} of $G$ if for any $u,v\in V(G)$, edge $uv\in E(H)$ if and only if $uv\in E(G)$. We also denote $H$ by $G[V(H)]$ to indicate that $H$ is induced by the vertex set $V(H)$.

A \textit{pseudoforest} is a graph in which each connected component can be made into a forest by deleting at most one edge. For a positive integer $r$, an \textit{r-pseudoforest} is a graph in which each connected component can be made into a forest by deleting at most $r$ edges.
A \textit{$d$-quasi-forest} is a graph in which each component admits a feedback vertex set of size at most $d$.

\section{Branching algorithm for $r$-pseudoforest deletion }
\begin{definition}
Given a graph $G=(V,E)$, a subset $S\subseteq V(G)$ is called an \textbf{$r$-pseudoforest deletion set} of $G$ if $G-S$ is an $r$-pseudoforest.
\end{definition}

Here is an iterative version of the parameterized $r$-pseudoforest deletion problem.

\textbf{$r$-pseudoforest Deletion}

    \emph{Instance:} Graph $G$ with an $r$-pseudoforest deletion set $S$, $|S|\leq k + 1$, integers $k$ and $r$.

    \emph{Parameter:} $k$ and $r$.

    \emph{Output:} Decide if there exists an $r$-pseudoforest deletion set $X$ of $G$ with $|X|\leq k$?

For a connected component $C$ in a graph $G$, we call the
quantity $|E(C)| - |V(C)| + 1$ the \textit{excess} of $C$ and denote it by $ex(C)$. Note that $ex(C)\geq 0$ for any connected component. Let $\cal C$ be the set of components in $G$. We define the excess
of $G$, denoted by $ex(G)$, as $ex(G) = max_{C\in \cal C} ex(C)$£¬ i.e. the maximum excess among all components in $G$.
By definition, $G$ is an $r$-pseudoforest if and only if $ex(G) \leq r$.
For vertex subset $S\subseteq V(G)$, let $cc(S) = cc(G[S])$ be the number of components in $G[S]$.

We give the following observations about $r$-pseudoforest.

\textbf{Observation 1:} If $G'$ is a subgraph of an $r$-pseudoforest, then $G'$ is also an $r$-pseudoforest.

\textbf{Observation 2:} If $G$ is an $r$-pseudoforest, then each component $C$ in $G$ has at most $|V(C)| - 1 + r$ edges.

By Observation 2, checking whether a given graph is an $r$-pseudoforest can be done in polynomial time.

Now we show how to solve the $r$-pseudoforest deletion via the approach of iterative compression.
As a standard step, we introduce the following disjoint version of $r$-pseudoforest deletion.

\quad

\textbf{Disjoint $r$-pseudoforest Deletion}

Input: Graph $G$ with an $r$-pseudoforest deletion set $S$, $|S|\leq k+1$, integers $k$ and $r$.

Parameter: $k$ and $r$.

Output: Decide if there exists an $r$-pseudoforest deletion set $X$ of $G$ with $|X|\leq k$ and $X\cap S =\emptyset$?

\quad

To solve Disjoint $r$-pseudoforest Deletion, we apply the following reduction rules.

\textbf{Reduction Rule 1}: Let $(G,S,k,r)$ be an instance of Disjoint $r$-pseudoforest Deletion, if there exists a vertex $v\in V(G)\setminus S$ such that $d_{G}(v) = 1$, then return $(G-v,S,k,r)$.

%The correctness of Reduction Rule 1 follows from Lemma \ref{lamma:irrelevant}.
The correctness of Reduction Rule 1 is easy to prove.
\begin{proof}
We show $(G,S,k,r)$ and $(G-v,S,k,r)$ are equivalent instances of Disjoint $r$-pseudoforest Deletion by proving that they have same solutions.

On the one hand, if $X\subseteq V(G)\setminus S$ is an $r$-pseudoforest deletion set of $G$ disjoint from $S$, then $X$ is also an $r$-pseudoforest deletion set of $G-v$.  Indeed, if $v\in X$, then $G-v-X = G-X$, otherwise, $G-v-X$ is a subgraph of $G-X$, which is an $r$-pseudoforest graph. In both cases, $G-X$ is an $r$-pseudoforest graph.

On the other hand, let $X\subseteq V(G)\setminus S$ be an $r$-pseudoforest deletion set of $G-v$ disjoint from $S$. Let $C$ be the component in $G-X$ containing $v$. By definition of $X$, $C-v$ is an $r$-pseudoforest in $G-v-X$. Since $d_{G}(v) = 1$, it follows that $C$ is also an $r$-pseudoforest in $G-X$. Furthermore, $G-v-X$ and $G-X$ only differ at component $C$. Thus, $X\subseteq V(G)\setminus S$ is an $r$-pseudoforest deletion set of $G$ disjoint from $S$.

\end{proof}

\textbf{Reduction Rule 2}: If there exists $v \in V(G)\setminus S$ such that $G[S \cup {v}]$ is not an
$r$-pseudoforest, then return $(G-v,S,k-1,r)$.

It is obvious that any $r$-pseudoforest deletion set disjoint from $S$ must contain $v$.

\textbf{Reduction Rule 3}: If there exists a vertex $u \in V (G) \setminus S$ of degree two, such
that at least one neighbor of $u'$ is in $V (G) \setminus S$, then delete $u$ and put a new
edge between its two neighbors (even if they were already adjacent). If both incident edges of $u$
are to the same vertex, delete $u$ and put a new loop on the adjacent
vertex (even if it has loop(s) already).

Now we prove the correctness of Rule 3.
\begin{lemma}
Reduction Rule 3 is correct.
\end{lemma}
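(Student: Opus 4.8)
The plan is to show that the reduced instance $(G',S,k,r)$, where $G'$ is obtained from $G$ by suppressing $u$ (deleting $u$ and inserting the edge $xy$ between its two neighbors $x,y$, or a loop at $x$ when $x=y$), is equivalent to $(G,S,k,r)$: both are valid instances and they have the same answer. Since $u\notin S$ and, by the rule's hypothesis, at least one neighbor of $u$ — say $y$ — lies in $V(G)\setminus S$, the newly added edge has an endpoint outside $S$, so $G'[S]=G[S]$, and $G'-S$ is obtained from $G-S$ either by suppressing the degree-two vertex $u$ or by deleting $u$ after it has dropped to degree at most $1$; in either case it stays an $r$-pseudoforest, so $S$ remains an $r$-pseudoforest deletion set of $G'$ while $|S|$ is unchanged. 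This settles validity of the reduced instance.

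The technical heart is the following invariant: for every $X\subseteq V(G)\setminus(S\cup\{u\})$, the graph $G-X$ is an $r$-pseudoforest if and only if $G'-X$ is. To prove it I would compare $G-X$ and $G'-X$ componentwise. If both $x,y\notin X$, then $G'-X$ is exactly the suppression of $u$ in $G-X$; suppression removes one vertex and, on balance, one edge (two edges deleted and one added, or two deleted and a loop added when $x=y$), so it leaves $|E(C)|-|V(C)|$ and hence the excess $ex(C)$ of the affected component unchanged, and thus $ex(\cdot)$ is preserved. If instead $x\in X$ or $y\in X$, then $u$ has degree at most $1$ in $G-X$, and by the reasoning behind Reduction Rule 1 (equivalently Observation 1) its deletion alters no excess, so again $G-X$ and $G'-X$ are $r$-pseudoforests simultaneously.

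Given this invariant the two directions of the equivalence follow. For the backward direction, any solution $X'$ of $(G',S,k,r)$ satisfies $u\notin X'$ automatically (as $u\notin V(G')$) and $X'\subseteq V(G)$, so the invariant immediately makes $X'$ a solution of $(G,S,k,r)$. For the forward direction, let $X$ be a solution of $(G,S,k,r)$. If $u\notin X$ the invariant directly gives that $X$ solves $(G',S,k,r)$. The only remaining case is $u\in X$, where I would build a replacement $X'$ avoiding $u$ of no larger size: writing $H=G-(X\setminus\{u\})$ so that $H-u=G-X$ is an $r$-pseudoforest, if $u$ has degree at most $1$ in $H$ then $H$ itself is an $r$-pseudoforest and $X'=X\setminus\{u\}$ works; otherwise both neighbors of $u$ are present in $H$, and I take $X'=(X\setminus\{u\})\cup\{y\}$ with $y\notin S$ the guaranteed outside-$S$ neighbor. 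Then $|X'|\le|X|\le k$ and $X'\cap S=\emptyset$, and in $H-y$ the vertex $u$ has degree at most $1$, so $H-y$ is an $r$-pseudoforest because $(H-u)-y$ is a subgraph of the $r$-pseudoforest $H-u$; hence $X'$ solves $(G,S,k,r)$ without using $u$, and the invariant transfers it to $G'$.

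I expect the main obstacle to be precisely this forward case $u\in X$: one must argue that deleting a neighbor of $u$ instead of $u$ itself cannot create a component of excess exceeding $r$. The two facts that make it go through are that the outside-$S$ hypothesis supplies a legitimate (disjoint-from-$S$) replacement vertex $y$, and that once $y$ is deleted $u$ drops to degree at most $1$, so the excess of every component is controlled by the subgraph $(H-u)-y$ of the known $r$-pseudoforest $H-u$. Some extra care is needed in the bookkeeping of the $x=y$ (double-edge/loop) case and in verifying that suppression genuinely \emph{preserves}, not merely bounds, the excess of the affected component.
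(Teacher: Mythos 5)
Your proof is correct and follows essentially the same route as the paper: both arguments rest on the facts that suppressing $u$ preserves the excess of the affected component, that a vertex of degree at most one never affects $r$-pseudoforest-ness, and that when $u$ lies in the solution it can be swapped for its neighbor outside $S$. The only difference is organizational: you package the $u\notin X$ cases into a clean invariant and replace the paper's appeal to minimality of $X$ (used to ensure the swap target is available) with an explicit case analysis on the degree of $u$ in $G-(X\setminus\{u\})$, which is a slightly more self-contained rendering of the same idea.
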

\begin{proof}
Note that the operation decreases both the edge number and vertex number by 1.

On the one hand, let $(G,S,k,r)$ be a yes-instance of Disjoint $r$-pseudoforest Deletion, and let $X$ be an minimal $r$-pseudoforest deletion set of size at most $k$, which is disjoint from $S$. Let $u$ be the vertex of degree 2 being deleted from $G$ and $G'$ be the resulting graph. Let $x,y$ be the two neighbors of $u$ in $G$.

If $u\in X$, then by the minimality of $X$, the neighbor of $u$ not in $S$ is not in $X$. Without loss of generality, we may assume that $s\not\in S$ and $x\not\in X$, then $X'=X\setminus \{u\}\cup \{x\}$ is an  $r$-pseudoforest deletion set of $G'$. Thus if $(G,S,k,r)$ is a yes instance, then $(G',S,k,r)$ is also a yes instance. %\textcolor{red}{need to be careful here, as we need $x\not\in S$}

If $u\not\in X$, then $u$ is an isolated vertex or in the component containing $x$ or $y$ in $G-X$. $G'-X$ can be obtained from $G-X$ by deleting $u$ or bypassing it, either way, $G'-X$ is an $r$-pseudoforest.

On the other hand, let $(G',S,k,r)$ be a yes instance of Disjoint $r$-pseudoforest Deletion, and let $X'$ be an minimal $r$-pseudoforest deletion set of size at most $k$, which is disjoint from $S$. If both $x,y$ are not in $X'$, then $G-X'$ can be obtained from $G'-X'$ by subdividing edge $xy$ and name the new vertex $u$. If at least one of $x$ and $y$ is in $X'$, then $u$ has degree 0 or 1 in $G-X$. In both cases, $G-X'$ is also an $r$-pseudoforest. It follows that $X'$ is also a solution for $(G,S,k,r)$.
\end{proof}

\textbf{Reduction Rule 4: If $k<0$, then return no.}

It is easy to see that Rules 1-4 can
be applied in polynomial time.
Given an instance $(G, S, k, r)$ of Disjoint $r$-pseudoforest Deletion, we first apply Rules 1-4 whenever possible.

Now we show how to handle the case when none of Rules 1-4 can be applied.

Define measure $\phi(I) = k+ cc(S) + \Sigma_{C \in {\cal C}(G[S])}(r - ex(C)).$ Note that initially
$\phi(I) \leq k + cc(S) + cc(S)r\leq 2k+(k+1)r+1< (k+1)(r+2)$ as $|S|\leq k+1$.

{In order to get a depth bounded search tree, we prove that $\phi(I)$ decreases after each application of the following branching rules.}

\quad

\textbf{BR-1. Branching on a vertex $v\notin S$ with $d_{S}(v) \geq 2$.}

In one branch, we put $v$ into the solution and call the algorithm on $(G-\{v\}, S, k-1, r)$. Note that in this branch, $cc(S)$, $ex(C)$ (for each $C \in {\cal C}(G[S])$) remain the same while $k$ decreases by 1. Hence $\phi(I)$ drops by 1.

In the other branch, we put $v$ into $S$ and call the algorithm on $(G, S \cup \{v\}, k, r)$. Let $S' = S\cup \{v\}$. There are the following two possible cases regarding the distribution of $N_{S}(v)$.

Case 1: $N_{S}(v)$ belongs to more than one component in $G[S]$, thus $cc(S')\leq cc(S) -1$. Let $C_1, C_2 ,\ldots, C_t (t\geq 2)$ be the set of components in $G[S]$ that are adjacent to $v$.

To compute the difference between excess sums in $S$ and $S'$, denote
\begin{eqnarray*}
\sigma_{S}  &=& \Sigma_{i \in [t]}(r - ex(C_i))\\
            &=& rt - \Sigma_{i \in [t]}(ex(C_i)) \\
            &=& rt - \Sigma_{i\in [t]} (|E(C_i)|-|V(C_i)|+1)\\
            &=& rt - \Sigma_{i\in [t]} |E(C_i)| + \Sigma_{i\in [t]} |V(C_i)|-t,
\end{eqnarray*}
\begin{eqnarray*}
\sigma_{S'}&=& r - ex(G[\cup_{i\in[t]}V(C_{i}) \cup \{v\}])\\
           &=& r - (|E(\cup_{i\in[r]}V(C_{i})\cup \{v\})| - |V(\cup_{i\in[r]}V(C_{i})\cup \{v\})| + 1) \\
           &=& r - (\Sigma_{i\in [r]} |E(C_i)| + d_{S}(v) - \Sigma_{i\in [r]}|V(C_i)|) %|V(C_i)|)$
\end{eqnarray*}
It follows that $\sigma_{S}-\sigma_{S'} = r(t-1)+ d_{S}(v) -t \geq r(t-1) \geq r$.
In this case, $\phi(I)$ drops by at least $1 + \sigma_{S}-\sigma_{S'} \geq 1+  r(t-1)\geq 1+r$.

Case 2: All the neighbors of $v$ belong to one component, denoted by $C^{*}$. Then $cc(S') = cc(S)$, and $ex(G[V(C^{*})\cup \{v\}])-ex(C^{*}) \geq 1$ as $d_{S}(v) \geq 2$. Hence $\phi(I)$ decrease by at least 1. %We still have that $ex(G[S'])$ does not exceed $r$, otherwise, we would have applied Reduction Rule 2.

Therefore, in BR-1, the measure $\phi(I)$ drops by $1$ in one case, and at least $1+ r$ or $1$ in the other, while remaining non-negative. In the worst case, it gives us branching vector $(1, 1)$.

\quad

Observe that after exhaustive applications of Rule 3 and BR-1, every vertex in $G-S$ has degree at least 3. Moreover, if there exists a vertex $u\not\in S$ such that $d_{G - S}(u)\leq 1$, then $d_{S}(u)\geq 2$. And so if $d_{S}(u)\leq 1$ holds for each $u\in G-S$, then $d_{G - S}(u)\geq 2$, that is $\delta(G-S)\geq 2$. Thus each $u\in G-S$ must be in a cycle.

If $G$ is not an $r$-pseudoforest, there must be edges between $G-S$ and $S$, as we know both $G[S]$ and $G-S$ are $r$-pseudoforest. In the following, we branch on vertices in $G-S$ adjacent to $S$.

\quad

\textbf{BR-2. Branching on a vertex $v \notin S$ adjacent to $S$.}

 First let us consider the case when there is a component $C$ in $G-S$ such that there is only one edge $uv$ between $C$ and $S$, where $u\in C$ and $v\in S$. {Note that if the solution should intersect $V(C)$, then it suffices to contain $u$.} Thus we may branch on whether $u$ is in the solution or not. If $u$ is in the solution, then $k$ decreases and thus $\phi(I)$ decreases by one. If $u$ is not in the solution, then we put $C$ into $S$, which greatly decreases the value of $r-ex(S)$, thus $\phi(I)$ also decreases by at least one.

 %If $S\cup C$ is not $r$-pseudoforest, then delete $u$ (and further we can ignore $C-u$) and decrease $k$ by 1. This is safe because we must delete at least one vertex from $G[C\cup S]$, and deleting $u$ suffices. If $S\cup C$ is an $r$-pseudoforest, branch on whether $u$ is deleted.

Now assume each component in $G-S$ has at least two edges to $S$.
Look at one shortest path $P$ in $G-S$, such that both endvertices of $P$ are adjacent to some vertex in $S$ (we allow $P$ to be an isolated vertex). Note that such a shortest path can be found in polynomial time. We prove that $|V(P)| \leq 2r+2$. As any component $C$ in $G-S$ is an $r$-pseudoforest, $|E(C)| - |V(C)| +1\leq r$. Note that each vertex in $P$ has degree at least 3 after exhaustive applications of Rule 3. And observe that no internal vertex of $P$ has an edge to $S$, otherwise we find a path shorter than $P$, a contradiction. Let $C_0$ be the component in $G-S$ containing $P$, we know  $|E(C_0)| \geq 3/2|V(P)|-2$, and $ex(C_0) \geq ex(G[V(P)]) \geq 3/2|V(P)|-2 - |V(P)| + 1$. As $ex(C_0) \leq r$, so $|V(P)| \leq 2r+2$.

We branch on whether to delete any vertex on the path $P$. Suppose $P= v_1,v_2\ldots, v_t$. We consider $t+1$ branches. In branch $i$ where $i\in [t]$, we delete vertex $v_i$ on path $P$, and call the algorithm on $(G-\{v_{i}\}, S, k-1, r)$. In branch $t+1$, we don't delete any vertex on $P$.  Note that for branch $i$, where $i\leq t$, $cc(S)$, $ex(C)$ (for any $C \in {\cal C}(G[S])$) remain the same while $k$ decrease by at least 1. Thus $\phi(I)$ drops by at least 1.

If $G[S+V(P)]$ is not an $r$-pseudoforest, then we ignore branch $t+1$.
Otherwise, for branch $t+1$, we get a new instance $(G, S', k, r)$, where $S' = S\cup V(P)$. If edges between $V(P)$ and $S$ are to the same component $C$ in $G[S]$, then $ex(C \cup V(P)) = ex(C) + 1$, thus in this branch, $\phi(I)$ decrease by 1. Otherwise, the edges between $V(P)$ and $S$ are to different components in $G[S]$.  In this case, $cc(S)$ decreases by 1 and $\sigma_{S}-\sigma_{S'}\geq 0$, so $\phi(I)$ drops by at least 1.
This gives us the $(t+1)$-tuple branching vector $(1, 1,\ldots,1)$ in which $t\leq 2r+2$.

According to the branching vectors in BR-1 and BR-2, the algorithm runs in time $O^{*}((2r+3)^{(k+1)(r+2)})$.

The following lemma states that a fast parameterized algorithm for the disjoint version problem gives a fast algorithm for the original problem.

\begin{lemma}\cite{DBLP:books/sp/CyganFKLMPPS15}
If there is an algorithm sloving Disjiont $r$-pseudoforest Deletion in time $f(k)n^{O(1)}$, then there is an algorithm solving $r$-pweudoforest Deletion in time $\sum_{i=0}^{i=k}\binom{k+1}{i}f(k-i)n^{O(1)}$.
\end{lemma}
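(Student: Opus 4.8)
The plan is to follow the standard iterative compression scheme, using the Disjoint $r$-pseudoforest Deletion algorithm as the compression subroutine. First I would fix an arbitrary ordering $v_1, \ldots, v_n$ of $V(G)$ and set $G_i = G[\{v_1, \ldots, v_i\}]$. I maintain the invariant that, after processing $G_i$, I hold an $r$-pseudoforest deletion set $S_i$ of $G_i$ with $|S_i| \le k$, or have correctly reported that none exists. For $i \le k$ the set $V(G_i)$ itself works, since $|V(G_i)| = i \le k$ and the empty graph is an $r$-pseudoforest, so the invariant holds at the start. The only real work is the inductive step: given $S_{i-1}$ with $|S_{i-1}| \le k$, the set $Z := S_{i-1} \cup \{v_i\}$ is an $r$-pseudoforest deletion set of $G_i$ of size at most $k+1$; if $|Z| \le k$ we are done, and otherwise $|Z| = k+1$ and I must \emph{compress} $Z$ down to size $k$.

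For the compression step I would guess, over all subsets $Y \subseteq Z$ with $|Y| \le k$, the intersection $Y = W \cap Z$ of a hypothetical size-$k$ solution $W$ with $Z$. Having committed $Y$ to the deletion set, I search for the remainder $W \setminus Z$ inside $G_i - Y$; since $W \setminus Z$ avoids $Z$, it avoids $Z \setminus Y$ as well. Crucially, $Z \setminus Y$ is an $r$-pseudoforest deletion set of $G_i - Y$, because $(G_i - Y) - (Z \setminus Y) = G_i - Z$ is an $r$-pseudoforest by the choice of $Z$; moreover $|Z \setminus Y| = (k+1) - |Y| = (k - |Y|) + 1$, so $(G_i - Y,\, Z \setminus Y,\, k - |Y|,\, r)$ is a legitimate instance of Disjoint $r$-pseudoforest Deletion. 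I run the assumed $f(\cdot)\,n^{O(1)}$-time algorithm on it; if it returns a set $X$ with $|X| \le k - |Y|$ and $X \cap (Z \setminus Y) = \emptyset$, then $Y \cup X$ is an $r$-pseudoforest deletion set of $G_i$ of size at most $k$, which I record as $S_i$. If no choice of $Y$ succeeds, then $G_i$ — and hence $G$, since $G_i$ is an induced subgraph of $G$ — has no solution of size $k$, and I output \textbf{no}.

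The correctness reduces to a single observation: any size-$k$ solution $W$ of $G_i$ is recovered by the branch $Y = W \cap Z$, because then $W \setminus Y = W \setminus Z$ is precisely a solution of $G_i - Y$ disjoint from $Z \setminus Y$ of the required size, so the subroutine accepts. For the running time, the compression at a single index $i$ iterates over all $Y \subseteq Z$ of size $j$ for $0 \le j \le k$, of which there are $\binom{k+1}{j}$, each incurring one call of cost $f(k - j)\,n^{O(1)}$; summing yields $\sum_{j=0}^{k}\binom{k+1}{j} f(k-j)\, n^{O(1)}$, and repeating this over the $n$ indices $i = 1, \ldots, n$ is absorbed into the polynomial factor.

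I expect the only delicate point to be the bookkeeping that certifies $Z \setminus Y$ as a valid modulator of size exactly $(k - |Y|) + 1$ for the disjoint instance — this is what makes the dropped parameter $k - |Y|$ line up with the $\binom{k+1}{j} f(k-j)$ term — together with the argument that a failure at some $G_i$ legitimately propagates to a global \textbf{no} via the induced-subgraph relation. Everything else is routine.
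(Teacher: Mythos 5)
Your proposal is correct and is exactly the standard iterative compression argument that the paper invokes by citation to Cygan et al. (the paper gives no proof of its own): build the graph vertex by vertex, extend the previous solution to a modulator $Z$ of size $k+1$, guess the intersection $Y$ of a solution with $Z$, and call the disjoint-version algorithm on $(G_i - Y,\, Z\setminus Y,\, k-|Y|,\, r)$, which yields precisely the claimed $\sum_{i=0}^{k}\binom{k+1}{i}f(k-i)\,n^{O(1)}$ bound. The only point worth noting is that you implicitly assume the disjoint-version algorithm returns a witness set rather than a yes/no answer; this is harmless here (the paper's branching algorithm is constructive, and self-reduction would in any case only add polynomial overhead).
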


 So we get an algorithm for $r$-pseudoforest deletion with running time
%$\sum_{i=0}^{i=k}\binom{k+1}{i}(2r+3)^{(k-i+1)(r+2)}< (2r+3)^{(2r+5)k}$,
$\sum_{i=0}^{i=k}\binom{k+1}{i}(2r+3)^{(k-i+1)(r+2)}< (1+(2r+3)^{r+2})^{k+1}$, which improves over the result in \cite{DBLP:journals/siamdm/PhilipRS18}.
Our result answers the question raised in \cite{DBLP:journals/dam/BodlaenderOO18} on whether there is an algorithm for $r$-pseudoforest deletion runs in time $O^{*}(c_{r}^{k})$.

\begin{theorem}
There exists an algorithm for $r$-pseudoforest deletion with running time %$(2r+3)^{(2r+5)k}n^{O(1)}$
$(1+(2r+3)^{r+2})^{k+1}n^{O(1)}$.
\end{theorem}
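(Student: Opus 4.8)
The plan is to assemble the pieces developed above into an iterative-compression algorithm. First I would invoke the cited compression lemma, which reduces solving $r$-pseudoforest Deletion to solving its disjoint variant: since the lemma yields running time $\sum_{i=0}^{k}\binom{k+1}{i} f(k-i)\,n^{O(1)}$, it suffices to exhibit a routine for Disjoint $r$-pseudoforest Deletion running in time $f(k)\,n^{O(1)}$ and then to control the resulting sum. The target is $f(k) = (2r+3)^{(k+1)(r+2)}$.

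To obtain $f$, I would bound the size of the search tree generated by exhaustively applying Reduction Rules 1--4 and then one of the branching rules BR-1, BR-2. The key invariant is the measure $\phi(I) = k + cc(S) + \sum_{C \in {\cal C}(G[S])}(r - ex(C))$, which is nonnegative and, as computed above, starts below $(k+1)(r+2)$. I would first check that no reduction rule increases $\phi$: Rules 1 and 3 touch only vertices outside $S$ and leave $k$, $cc(S)$, and every $ex(C)$ untouched, while Rules 2 and 4 only decrease $k$ or terminate. Each branching rule then strictly decreases $\phi$ by at least $1$ along every branch, which was established case by case. Hence the depth of the search tree is at most the initial value of $\phi$, i.e. less than $(k+1)(r+2)$. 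Since BR-1 produces $2$ children and BR-2 produces at most $t+1 \le 2r+3$ children (using the bound $|V(P)| \le 2r+2$), the branching factor never exceeds $2r+3$. Thus the tree has fewer than $(2r+3)^{(k+1)(r+2)}$ leaves, and as every node takes polynomial time, the disjoint problem is solved in $O^{*}((2r+3)^{(k+1)(r+2)})$.

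Finally I would feed $f(k) = (2r+3)^{(k+1)(r+2)}$ into the compression recurrence and simplify. Writing $a = (2r+3)^{r+2}$, the term indexed by $i$ equals $\binom{k+1}{i} a^{(k+1)-i}$, so the whole sum is $\sum_{i=0}^{k}\binom{k+1}{i} a^{(k+1)-i}$; this is precisely the binomial expansion of $(1+a)^{k+1}$ with the lone term $i=k+1$ (equal to $1$) removed. Therefore the total running time is at most $\big((1+(2r+3)^{r+2})^{k+1} - 1\big)\,n^{O(1)} < (1+(2r+3)^{r+2})^{k+1}\,n^{O(1)}$, as claimed.

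The main obstacle is not any single calculation but making the measure argument airtight: one must be certain that the reduction rules can only help, never inflating $\phi$, and that the worst-case branching factor is genuinely $2r+3$ rather than something larger concealed in BR-2. Once monotonicity of $\phi$ under the reductions and the $|V(P)| \le 2r+2$ path bound are nailed down, the depth-times-branching count and the binomial collapse are routine.
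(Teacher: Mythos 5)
Your proposal is correct and follows essentially the same route as the paper: solve the disjoint version by measure-bounded branching (depth $<(k+1)(r+2)$ from the nonnegative measure $\phi$, branching factor at most $2r+3$ from BR-1/BR-2), then plug $f(k)=(2r+3)^{(k+1)(r+2)}$ into the compression lemma and collapse the sum $\sum_{i=0}^{k}\binom{k+1}{i}a^{k+1-i}$ with $a=(2r+3)^{r+2}$ into $(1+a)^{k+1}-1$. Your explicit checks that the reduction rules never increase $\phi$ and that the binomial identity accounts for the missing $i=k+1$ term are slightly more careful than the paper's own presentation, but the argument is the same.
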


\section{Kernelization of $r$-pseudoforest deletion}
In this section, we give an improved kernel for the $r$-pseudoforest deletion problem. By exhaustively applying Reduction Rules 1 and 3, we will get an instance with minimum degree at least 3.

\begin{lemma}\label{lemma1}
If a graph $G$ has minimum degree at least 3, maximum degree at most $d$, and an $r$-pseudoforest deletion set of size at most $k$, then it has at most $(2dr-d+1)k$  vertices and at most $3kdr-kd$  edges.
\end{lemma}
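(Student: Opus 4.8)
The plan is to run a degree-counting (discharging) argument against two budgets: the number of edges incident to $S$, and the total excess of the $r$-pseudoforest $F=G-S$. Let $S$ be an $r$-pseudoforest deletion set with $|S|\le k$, so that $F=G-S$ is an $r$-pseudoforest, say with $c$ components $C_1,\dots,C_c$. Since every vertex has degree at most $d$, the total degree of the vertices of $S$ is at most $dk$; in particular the number of edges between $S$ and $F$, written $e(S,F)$, is at most $dk$, and the number of edges incident to $S$ (counted once) is also at most $dk$. I would then partition $V(F)$ by degree in $F$: let $a_0,a_1,b$ be the numbers of vertices of $F$-degree $0,1,2$ and $h$ the number of $F$-degree at least $3$, so $|V(F)|=a_0+a_1+b+h$.

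First I would exploit $\delta(G)\ge 3$. A vertex of $F$-degree $j\le 2$ satisfies $d_S(v)\ge 3-j$, so the low-degree classes alone contribute at least $3a_0+2a_1+b$ to $e(S,F)$, whence
\[
3a_0+2a_1+b\le dk .
\]
Next I would exploit that $F$ is an $r$-pseudoforest. From $ex(C_i)\le r$ we get $|E(C_i)|\le |V(C_i)|-1+r$, so $|E(F)|\le |V(F)|+c(r-1)$ and therefore $\sum_{v\in V(F)}(d_F(v)-2)=2|E(F)|-2|V(F)|\le 2c(r-1)$. Splitting this sum by degree class, the degree-$0$ and degree-$1$ vertices contribute $-2a_0-a_1$, the degree-$2$ vertices contribute $0$, and each degree-$\ge 3$ vertex contributes at least $1$, giving
\[
h\le 2a_0+a_1+2c(r-1).
\]
Adding the two displayed bounds yields $|V(F)|=a_0+a_1+b+h\le (3a_0+2a_1+b)+2c(r-1)\le dk+2c(r-1)$.

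The one remaining quantity to control, and the crux of the argument, is the number of components $c$ of $F$. Here I would first note that we may assume every connected component of $G$ contains a vertex of $S$, since any component of $G$ that is already an $r$-pseudoforest can be discarded without affecting the answer. Given this, each component $P$ of $F$ sits inside a component $\Gamma$ of $G$ with $S\cap\Gamma\neq\emptyset$; as $\Gamma$ is connected and $P$ avoids the nonempty set $S\cap\Gamma$, the set $P$ must have an edge to $S$. Hence $c\le e(S,F)\le dk$, and so $|V(F)|\le dk+2dk(r-1)=(2r-1)dk$, which gives $|V(G)|=|S|+|V(F)|\le k+(2r-1)dk=(2dr-d+1)k$.

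Finally the edge bound drops out of the same estimates: $|E(G)|=|E(F)|+(\text{edges incident to }S)\le\bigl(|V(F)|+c(r-1)\bigr)+dk\le (2r-1)dk+(r-1)dk+dk=(3r-1)dk=3kdr-kd$. I expect the main obstacle to be the bound $c\le dk$, which is exactly where the (mild) structural assumption on $G$ enters; the rest is routine counting, and it is the sharp coefficients $3,2,1$ in the first inequality that make the constant in the vertex bound come out precisely as stated.
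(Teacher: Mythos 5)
Your proof is correct and follows essentially the same route as the paper: bound the number $c$ of components of $F=G-S$ by $dk$, use the excess bound to get $|E(F)|\le |V(F)|+c(r-1)$, and play the minimum-degree-3 hypothesis against the at most $dk$ edges incident to $S$; your degree-class bookkeeping ($3a_0+2a_1+b\le dk$ together with $h\le 2a_0+a_1+2c(r-1)$) is just a rearrangement of the paper's one-line count $3|V(F)|\le 2|E(F)|+|E(S,V(F))|$, and both yield the identical intermediate inequality $|V(F)|\le dk+2c(r-1)$, after which the arithmetic coincides. The one point where you genuinely improve on the paper is the step $c\le dk$: the paper justifies it by ``deleting a vertex of degree $t$ produces at most $t$ components,'' which silently assumes every component of $F$ is adjacent to the deletion set (as literally stated the lemma even fails, e.g.\ for a disjoint union of many copies of $K_4$ with $r\ge 3$ and $k=0$), whereas you make the needed hypothesis explicit --- discard components of $G$ that are already $r$-pseudoforests --- and then correctly conclude that every component of $F$ sends an edge to $S$, so $c\le e(S,F)\le dk$.
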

\begin{proof}
Let $X$ be an $r$-pseudoforest deletion set of $G$ of size at most $k$. Let $F = G-X$. It follows that each component in $F$ can be made into a forest by deleting at most $r$ edges.
Suppose there are $c$ components in $F$, we know that $c \leq kd$, since deleting any vertex of degree $t$ produces at most $t$ components. For each
component $C_i$ with $i\in[c]$, we know there are at most $|V(C_i)|-1+r$ edges. Thus,
$|E(F)|=\Sigma_{i \in [c]}|E(C_i)| \leq \Sigma_{i \in [c]}(|V(C_i)|-1+r) = |V(F)|+c(r-1)$. By counting the number of edges incident with $V(F)$, we know that
$$
3|V(F)| \leq 2(|E(F)|)+|E(X,V(F))| \leq 2(|V(F)|+c(r-1))+kd.
$$
It follows that $|V(F)| \leq 2c(r-1)+kd$. So $|V(G)| \leq |X|+|V(F)| \leq 2c(r-1)+ k(d + 1)\leq (2dr-d+1)k$. And $|E(G)|\leq |E(F)|+|E(X,V(F))| + |E(G[X])| \leq |V(F)|+c(r-1)+kd < 3c(r-1)+2kd=3kdr-kd$.

 \end{proof}

 We need to make use of the following result.
\begin{theorem}[\cite{DBLP:journals/siamdm/PhilipRS18}]\label{theorem1}
 Given an instance $(G, k)$ of $r$-pseudoforest Deletion, in polynomial time, we can get an equivalent instance $(G', k')$ such that $k' \leq k$, $|V (G')| \leq |V (G)|$ and the maximum degree of $G'$ is at most $(k + r)(3r + 8)$.
\end{theorem}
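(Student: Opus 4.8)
The plan is to establish Theorem~\ref{theorem1} as a preprocessing result assembled from polynomial-time reduction rules, each of which preserves the answer, never increases $k$, and never increases $|V(G)|$; when none of them applies, the maximum degree will be bounded by $(k+r)(3r+8)$. First I would normalize the instance by exhaustively applying the degree-$1$ deletion and degree-$2$ suppression rules (the analogues of Rules~1 and~3), together with the rule that deletes any vertex forced into every solution, so that I may reason about a graph in which every surviving vertex already lies on a cycle. The real content is a single \textbf{high-degree rule}: I would show that a vertex $v$ with $d_G(v) > (k+r)(3r+8)$ is either forced into every $r$-pseudoforest deletion set of size at most $k$, in which case I delete $v$ and decrease $k$, or else carries a local structure admitting a reduction that strictly decreases $|V(G)|$.

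The engine for ``$v$ is forced'' is a flower argument. Call a \emph{$v$-flower of order $p$} a collection of $p$ cycles through $v$ that are pairwise vertex-disjoint apart from $v$; such a flower has cyclomatic number exactly $p$, and since the cyclomatic number (equivalently, the excess $ex$) is monotone under taking subgraphs, $p$ surviving petals force the component of $v$ in $G-X$ to have excess at least $p$. Hence, if $X$ is a solution of size at most $k$ avoiding $v$, then $X$ can destroy at most $k$ of the internally disjoint petals, so at least $p-k$ survive; as $G-X$ is an $r$-pseudoforest this gives $p-k\le r$. Consequently a $v$-flower of order $p \ge k+r+1$ certifies that $v$ lies in every solution of size at most $k$, and a maximum $v$-flower can be computed in polynomial time by a matching/flow reduction.

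If no $v$-flower of order exceeding $k+r$ exists, I would invoke Gallai's theorem to obtain a set $Z\subseteq V(G)\setminus\{v\}$ with $|Z| = O(k+r)$ meeting every cycle through $v$. In $G-Z$ the vertex $v$ lies on no cycle, so each neighbor of $v$ outside $Z$ hangs off $v$ in a distinct tree. Since at most $|Z|$ of $v$'s edges enter $Z$, and the excess available in the component of $v$ is at most $r$, only a bounded number of these tree-attachments can be charged to the solution or to the excess; the remaining attachments are redundant and can be cleaned up by a suppression or $q$-expansion reduction that strictly shrinks the graph. Comparing the two regimes should yield a degree threshold of the shape $(k+r)\,g(r)$, and matching the claimed value $g(r)=3r+8$ is then a matter of optimizing the constants in the flower bound, the excess bound $r$, and the Gallai hitting set.

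The step I expect to be the main obstacle is this last one: showing that a vertex of degree above the threshold but \emph{without} a large flower necessarily carries a reducible local structure. This is where the $r$-pseudoforest relaxation genuinely complicates the classical feedback-vertex-set argument, because a component of $G-X$ may legitimately contain up to $r$ independent cycles through $v$, so one cannot simply treat $v$ as a cut vertex after deleting $Z$. The careful bookkeeping that converts ``small flower plus bounded excess'' into either a forced deletion or a size-decreasing reduction, and in particular the verification of the exact constant $3r+8$, is the routine-but-delicate part I would defer to the full proof.
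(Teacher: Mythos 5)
You should first note that the paper contains no proof of Theorem~\ref{theorem1} at all: it is imported verbatim from \cite{DBLP:journals/siamdm/PhilipRS18} and used as a black box, so the only meaningful comparison is with the proof in that cited source. Your skeleton does match that proof's skeleton: the forced-vertex half via flowers is correct and complete as you state it (a $v$-flower of order $p$ has excess exactly $p$, excess is subgraph-monotone, each vertex of a solution $X$ with $v \notin X$ destroys at most one petal, hence $p \le k+r$ for a yes-instance, so order $k+r+1$ forces $v$), and Gallai's theorem is indeed the tool invoked when no large flower exists.

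The genuine gap is that the entire content of the theorem lives in the step you defer. Showing that a vertex of degree above $(k+r)(3r+8)$ with no $(k+r+1)$-flower admits a polynomial-time, equivalence-preserving reduction is not ``routine-but-delicate bookkeeping'': it is the heart of the Thomass\'e-style degree reduction, and adapting it to $r$-pseudoforests is precisely what \cite{DBLP:journals/siamdm/PhilipRS18} had to do. In that proof, after Gallai yields a hitting set $Z$ of size $O(k+r)$, one applies a $q$-expansion lemma to the bipartite structure between $Z$ and the components of $G - Z - v$ attached to $v$, and the reduction it licenses removes \emph{edges} (and installs gadgets such as multi-edges or loops), not vertices; so your claim that the second regime yields a reduction that ``strictly decreases $|V(G)|$'' is not what actually happens, and one needs a different progress measure to argue termination in polynomial time. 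Moreover, the correctness of that reduction is exactly where the $r$-pseudoforest relaxation bites --- a component of $G-X$ may legitimately keep $r$ independent cycles through $v$, so edges incident to $v$ cannot be discarded merely because they do not lie on a surviving flower --- and the constant $3r+8$ falls out of that analysis rather than from ``optimizing constants'' post hoc. As it stands, your proposal proves the forced-vertex rule and nothing about the degree bound, so it does not establish the theorem.
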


\begin{theorem}\label{theorem2}
The $r$-pseudoforest deletion problem admits a kernel with $O(k^2r^2)$ vertices and $O(k^2r^2)$ edges.
\end{theorem}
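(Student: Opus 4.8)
The plan is to combine the two structural reductions already available—Philip et al.'s degree-reduction (Theorem~\ref{theorem1}) and the low-degree clean-up Reduction Rules~1 and~3—to produce an equivalent instance whose underlying graph simultaneously has minimum degree at least $3$ and maximum degree at most $d := (k+r)(3r+8)$, and then to invoke Lemma~\ref{lemma1} with this value of $d$.

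First I would apply Theorem~\ref{theorem1} to the input instance $(G,k)$, obtaining in polynomial time an equivalent instance $(G',k')$ with $k'\le k$, $|V(G')|\le |V(G)|$, and maximum degree at most $d=(k+r)(3r+8)$. Next I would exhaustively apply the natural non-disjoint analogues of Reduction Rules~1 and~3 (deleting vertices of degree at most $1$ and bypassing vertices of degree $2$), together with deletion of isolated vertices; their correctness is identical to the disjoint case already proven. The crucial point—and the step that makes the two reductions compatible—is that these clean-up rules never raise the maximum degree: Rule~1 only removes edges, and when Rule~3 bypasses a degree-$2$ vertex $u$ with neighbours $x,y$ it replaces the edges $ux,uy$ by a single edge $xy$ (or a loop), leaving the degrees of $x$ and $y$ unchanged. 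Hence after this phase the graph still has maximum degree at most $d$ while now also having minimum degree at least $3$; since each rule strictly decreases the number of vertices, the phase terminates in polynomial time, and every step preserves equivalence.

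Finally I would read off the size bound. If the resulting instance is a yes-instance it has an $r$-pseudoforest deletion set of size at most $k'\le k$, so Lemma~\ref{lemma1} applies with the above $d$ and bounds the number of vertices by $(2dr-d+1)k$ and the number of edges by $3kdr-kd$. Substituting $d=(k+r)(3r+8)$ and simplifying yields the claimed $O(k^2r^2)$ bounds, the dominant term being of order $k^2r^2$. To turn this into a genuine kernelization I would add the standard size check: if after the reductions the graph has more than $(2dr-d+1)k$ vertices, then by the contrapositive of Lemma~\ref{lemma1} it cannot be a yes-instance, so I output a constant-size trivial no-instance; otherwise the reduced instance itself is the desired kernel.

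I expect the main obstacle to be exactly the interaction between the two reductions: one must be certain that restoring the minimum-degree condition does not destroy the maximum-degree condition (resolved by the observation that Rule~3 preserves neighbour degrees) and that the clean-up does not create new low-degree vertices that would force another round of Theorem~\ref{theorem1}—which it does not, so no iteration is needed. A secondary point to watch is the arithmetic leading to $O(k^2r^2)$: expanding $(2dr-d+1)k$ also produces a term of order $kr^3$, which is absorbed into $k^2r^2$ in the regime $r\le k$.
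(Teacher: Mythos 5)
Your proposal is correct and follows essentially the same route as the paper: apply Theorem~\ref{theorem1} to bound the maximum degree by $d=(k+r)(3r+8)$, exhaustively apply Reduction Rules~1 and~3 to achieve minimum degree at least $3$, and then invoke Lemma~\ref{lemma1} to read off the $O(k^2r^2)$ bounds. Your write-up is in fact more careful than the paper's on points it leaves implicit---that the clean-up rules preserve the maximum-degree bound, and that a final size check producing a trivial no-instance is needed to make this a genuine kernelization---and your observation about the $kr^3$ term (which is only absorbed into $O(k^2r^2)$ when $r\le k$) is a real subtlety that the paper's stated bound glosses over.
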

\begin{proof}
According to Theorem \ref{theorem1}, we know that $r$-pseudoforest deletion admits a kernel with maximum degree at most  $d=(k + r)(3r + 8)$. Thus by Lemma \ref{lemma1}, we can obtain a kernel for $r$-pseudoforest deletion which has at most $(2r-1)kd+k=O(k^2r^2)$ vertices, and at most $(3r-1)kd = (3r-1)k(k + r)(3r + 8)=O(k^2r^2)$ edges.
\end{proof}
The kernel in Theorem \ref{theorem2} improves over the kernel in \cite{DBLP:journals/siamdm/PhilipRS18}, in which the kernel is of size $O(ck^2)$, where the constant $c$ depends on $r$ exponentially.

\section{$d$-quasi-forest deletion}

\begin{definition}
Given graph $G=(V,E)$, a subset $S\subseteq V(G)$ is called a \textbf{$d$-quasi-forest deletion set} of $G$ if $G-S$ is a $d$-quasi-forest.
\end{definition}

Let us recall the definition of $d$-quasi-forest deletion first.

 \quad

  { \textbf{$d$-quasi-forest deletion}} \nopagebreak

    \emph{Instance:} An undirected graph $G$, an integer $k$.

    \emph{Parameter:} $k$.

    \emph{Output:} Decide if there exists a set $X\subseteq V(G)$ with $|X|\leq k$ such that $G-X$ is a $d$-quasi-forest.

\quad

\begin{lemma}
A yes instance of $d$-quasi-forest deletion has treewidth at most $k+d+1$.
\end{lemma}
\begin{proof}
Let $G$ be a yes instance of $d$-quasi-forest deletion, then there is a set $X\subseteq V(G)$, such that each component in $G-X$ has feedback vertex set of size at most $d$. Thus each component in $G-X$ has treewidth at most $d+1$. By putting $X$ into each bag of the tree decomposition for each component in $G-X$, we obtain a tree decomposition of $G$, which has treewdith at most $k+d+1$. It follows that $G$ has treewidth at most $k+d+1$.
\end{proof}
{We first point out that that the problem is  FPT according to Courcelle's  theorem by expressing it with Monadic Second Logic.}
The basic idea of the expression is as follows: $\exists v_1,v_2,\ldots, v_k\in V(G)$ such that  $\forall X\subseteq V(G)-\{v_1,v_2,\ldots, v_k\}$, $Conn(X)\rightarrow FVS(X)\leq d$. The definition of $FVS(X)\leq d$ is as follows:
$\exists y_1,y_2,\ldots, y_d\in X$, such that $\neg$ $ExistsCycle(X-\{y_1,y_2,\ldots, y_d\})$. The definition of $ExistsCycle(X)$ is as follows: $\exists E\subseteq E(G)$, such that $Conn(E)$ and $\forall e\in E$, $\exists u,v\in X$, such that $Inc(u,e)$ and $Inc(v,e)$ and $Deg(u,E)=2$.

%\textcolor{blue}{To be sure of the correctness, we may write $\exists v_1,v_2,\ldots, v_k\in V(G)$ instead of $\exists S\subseteq V(G)$ such that $size(S)\leq k$, as the $d$-quasi-forest is hereditary.}

\begin{theorem}(Courcelle)
Given a graph $G$ and a formula $\varphi$ in  Monadic Second Logic describing a property of interest, and parameterizing by the combination of $tw(G)$  and the size of the formula $\varphi$,
it can be determined in time $f(tw(G), |\varphi|)n^{O(1)}$ whether $G$ has the property of interest.
%The key player is the fragment of second-order logic in the language of graphs.
\end{theorem}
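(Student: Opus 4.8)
The plan is to prove this via the classical automata-theoretic route, establishing that every MSO-definable property is recognizable by a finite tree automaton running over a tree decomposition. First I would invoke a tree-decomposition algorithm (for instance Bodlaender's linear-time algorithm, or any FPT procedure producing a decomposition of width $O(tw(G))$) to obtain a rooted tree decomposition $(T, \{B_t\}_{t\in V(T)})$ of width $w = O(tw(G))$ in time $f(tw(G))\,n^{O(1)}$. I would then convert this into a \emph{nice} tree decomposition and encode the pair consisting of $G$ together with its decomposition as a labeled binary tree $\tau$, where each node carries the isomorphism type of the subgraph induced on its bag together with the bookkeeping needed to track which bag vertices are introduced, forgotten, or joined. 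The crucial point is that the alphabet of $\tau$ has size bounded by a function of $w$ alone, so $\tau$ is a tree over a finite alphabet whose cardinality depends only on $tw(G)$.

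The heart of the proof is to translate the formula $\varphi$ into a finite bottom-up tree automaton $\mathcal{A}_\varphi$ that accepts $\tau$ if and only if $G$ has the property described by $\varphi$. I would build $\mathcal{A}_\varphi$ by structural induction on $\varphi$, working with formulas whose free vertex and set variables are interpreted as extra coordinates in the tree alphabet. Atomic formulas such as incidence, adjacency, and membership translate into simple automata that merely check local consistency inside each bag; the Boolean connectives $\wedge$, $\vee$, $\neg$ correspond to the product and complementation constructions on tree automata; and each existential quantifier $\exists X$ is handled by \emph{projection}, which erases the coordinate encoding $X$ and typically yields a nondeterministic automaton. The hard part will be controlling the size of $\mathcal{A}_\varphi$: determinizing after each projection causes the state count to grow, and in general it becomes a tower of exponentials whose height is the quantifier-alternation depth of $\varphi$. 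The main technical obstacle is therefore to argue that, despite this blow-up, the construction terminates with a finite deterministic automaton whose total size is bounded by some computable $f(tw(G), |\varphi|)$.

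Finally I would run the resulting deterministic automaton $\mathcal{A}_\varphi$ on $\tau$ in a single bottom-up pass. Each node is processed in time proportional to the size of the transition table, which depends only on $tw(G)$ and $|\varphi|$, so the total running time is $|\mathcal{A}_\varphi|\cdot|V(T)| = f(tw(G), |\varphi|)\,n^{O(1)}$, as claimed. The correctness of the whole scheme rests on the B\"uchi--Elgot--Trakhtenbrot correspondence lifted to trees, namely that a tree language is MSO-definable precisely when it is recognizable by a finite tree automaton; only the forward direction (definable implies recognizable) is needed here, and its verification --- that $\mathcal{A}_\varphi$ accepts $\tau$ if and only if $G \models \varphi$ --- is carried out by induction following exactly the inductive construction sketched above.
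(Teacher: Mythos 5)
The paper does not actually prove this statement: it is Courcelle's theorem, quoted verbatim as a black-box result from the literature and used only to conclude (together with the treewidth bound $k+d+1$ from the preceding lemma and the displayed MSO formula) that $d$-quasi-forest deletion is FPT. Your proposal, by contrast, sketches the standard automata-theoretic proof of the theorem itself, and the outline is essentially correct: compute a width-$O(tw(G))$ decomposition by Bodlaender's algorithm, make it nice, encode graph plus decomposition as a tree over an alphabet whose size depends only on the width, translate $\varphi$ into a tree automaton by structural induction (products and complementation for Boolean connectives, projection for existential quantifiers), and run the automaton bottom-up. Two remarks. First, what you call the ``main technical obstacle'' --- the determinization blow-up after each projection --- is not an obstacle to the statement at all: finite tree automata can always be determinized, so the construction terminates with a finite automaton whose size, while a tower of exponentials of height roughly the quantifier alternation depth of $\varphi$, is still a computable function of $tw(G)$ and $|\varphi|$; this non-elementary dependence is exactly why the paper remarks that ``the algorithm implied by Courcelle's Theorem may be several layers exponential'' and then develops a direct algorithm instead. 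Second, the formula the paper needs quantifies over edge sets ($\exists E\subseteq E(G)$ in the definition of $ExistsCycle$), so the relevant version is MSO$_2$: your encoding must treat edges as first-class objects in the bags (incidence structure), not merely record adjacency between bag vertices; your passing mention of incidence atoms points in the right direction, but in a complete write-up this distinction between MSO$_1$ and MSO$_2$ should be made explicit, since the two logics have different expressive power and only the MSO$_2$ form of Courcelle's theorem covers the paper's application.
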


\begin{theorem}
The $d$-quasi-forest deletion problem is FPT parameterized with $k$ and $d$.
\end{theorem}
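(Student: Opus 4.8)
The plan is to combine the two preceding results—the treewidth bound and Courcelle's theorem—into a single FPT algorithm. First I would invoke the lemma stating that every yes-instance of $d$-quasi-forest deletion has treewidth at most $k+d+1$. The subtlety is that we do not know in advance whether the input is a yes-instance, so I would run a standard FPT treewidth approximation (for instance Bodlaender's algorithm, or the $O(c^{tw}n)$ approximation of Bodlaender et al.) to either produce a tree decomposition of width $O(k+d)$ or correctly certify that $\mathrm{tw}(G) > k+d+1$; in the latter case the lemma guarantees we may immediately answer \textbf{no}. This step ensures that whenever we proceed to the model-checking phase, we have a tree decomposition whose width is bounded by a function of the parameters $k$ and $d$ alone.

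Next I would feed the Monadic Second Order formula $\varphi$ sketched in the excerpt into Courcelle's theorem. The formula asserts the existence of $k$ vertices $v_1,\dots,v_k$ whose deletion leaves a graph in which every connected vertex subset $X$ has a feedback vertex set of size at most $d$, and the sub-formulas $FVS(X)\le d$ and $ExistsCycle(X)$ are themselves fixed MSO expressions. The crucial point is that $|\varphi|$ depends only on $k$ and $d$ (the existential blocks quantify over exactly $k$ and $d$ vertices respectively), so it is a constant once the parameters are fixed. Applying Courcelle's theorem to $G$ together with the tree decomposition of width $\mathrm{tw}(G)=O(k+d)$ then decides the property in time $f(\mathrm{tw}(G),|\varphi|)\,n^{O(1)} = g(k,d)\,n^{O(1)}$ for some computable function $g$, which is exactly the definition of being FPT with respect to $k$ and $d$.

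I expect the main obstacle to be verifying that the sketched MSO formula genuinely and correctly captures the $d$-quasi-forest deletion property, rather than the complexity bookkeeping. In particular one must check that $ExistsCycle(X)$ correctly expresses the presence of a cycle: the natural encoding is to quantify over an edge subset $E$ that is connected and in which every vertex incident to an edge of $E$ has degree exactly two within $E$, which forces $E$ to contain a cycle. One should confirm that this is expressible in MSO over the incidence-graph encoding (using predicates like $Inc$ and $Deg(u,E)=2$), that the implication $Conn(X)\rightarrow FVS(X)\le d$ ranges correctly over connected subsets so as to capture the per-component requirement of the $d$-quasi-forest definition, and that the whole sentence has size bounded by a function of $k$ and $d$ only. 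Once the formula is confirmed to be a correct, parameter-bounded MSO sentence, the theorem follows immediately from the combination of the treewidth lemma and Courcelle's theorem.
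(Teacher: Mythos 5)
Your proposal is correct and follows essentially the same route as the paper: the treewidth lemma for yes-instances, MSO expressibility of the deletion property (with formula size depending only on $k$ and $d$), and Courcelle's theorem. In fact, you fill in details the paper leaves implicit (running a treewidth computation/approximation first and answering \textbf{no} when the width exceeds $k+d+1$, and scrutinizing the $ExistsCycle$ encoding), so the argument matches and slightly sharpens the paper's own reasoning.
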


Unfortunately, the algorithm implied by Courcelle's Theorem may be several layers exponential.
Aiming at fully exploiting the problem structure and design a faster algorithm, we solve the $d$-quasi-forest deletion problem by the iterative compression approach. By Guessing the intersection of $Z$ and $X$, the problem is transformed into the following disjoint version.

 \quad

  { \textbf{Disjoint $d$-quasi-forest deletion}} \nopagebreak

    \emph{Instance:} An undirected graph $G$, an integer $k$ and a vertex set $Z\subseteq V(G)$ with $|Z|\leq k+1$ such that $G-Z$ is a $d$-quasi-forest.

    \emph{Parameter:} $k$.

    \emph{Output:} Decide if there exists a set $X\in V(G)$ with $|X|\leq k, X\cap Z=\emptyset$ such that $G-X$ is a $d$-quasi-forest.

\quad
%Given graph $G$, integer $k$ and a set of vertices $Z\subseteq V(G)$ such that $|Z|\leq k+1$ and $G-Z$ is a 1-quasi-forest, the problem is to find a set $X\subseteq V(G)$ such that $X\cap Z=\emptyset$, $|X|\leq k$ and $G-X$ is a 1-quasi-forest.

Denote $F=G-Z$, then $F$ is a $d$-quasi-forest, that is, each connected component in $F$ admits feedback vertex set of size at most $d$. Note that according to the algorithm in \cite{DBLP:conf/iwpec/IwataK19}, we can check whether a given graph is a $d$-quasi-forest in time $O^{*}(3.460^d)$.

%we want to solve the problem via branching.

\textbf{Reduction Rule 1:}
If there is a vertex $u$ with degree at most one, then delete $u$ and return a new instance $(G-u,k)$.

\textbf{Reduction Rule 2:}
If there is a vertex $u$ with degree exactly two in $G$, then delete $u$ and add an edge between the neighbors of $u$.

After exhaustive applications of Reduction Rules 1-2, the resulting instance has minimum degree at least 3.

%\textbf{Reduction Rule 3:}
\textbf{Reduction Rule 3:} Observe that $G[Z]$ is a $d$-quasi-forest, otherwise it is a no-instance. If there is any vertex $u\in V(G)$ such that $G[u\cup V(Z)]$ is not a $d$-quasi-forest, then delete $u$ and decrease $k$ by one.

\begin{lemma}
Suppose $(G,Z,k)$ is a yes instance of Disjoint $d$-quasi-forest deletion, then for each vertex $u\in Z$, there are at most $k+d$ components in $G-Z$ that contains at least one cycle adjacent with $u$.
\end{lemma}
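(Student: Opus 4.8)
The plan is to combine the assumed solution with the $d$-quasi-forest structure of the graph after deletion, classifying the relevant components of $G-Z$ according to whether the solution meets them. Since $(G,Z,k)$ is a yes instance, fix a set $X$ with $|X|\le k$, $X\cap Z=\emptyset$, such that $G-X$ is a $d$-quasi-forest. Fix the vertex $u\in Z$ and let $\mathcal{C}$ be the collection of components of $G-Z$ that contain a cycle adjacent with $u$; the goal is to show $|\mathcal{C}|\le k+d$. First I would isolate the component $C_u$ of $G-X$ containing $u$. Because $G-X$ is a $d$-quasi-forest, $C_u$ admits a feedback vertex set $D$ with $|D|\le d$, so that $C_u-D$ is a forest; this set is the source of the additive $d$ term.

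Next I would partition $\mathcal{C}$ into the components met by $X$ and those missed by it. The components met by $X$ number at most $k$, since distinct components of $G-Z$ are vertex-disjoint and $|X|\le k$. For a component $C\in\mathcal{C}$ disjoint from $X$, let $\gamma_C$ be a cycle in $C$ adjacent with $u$. The crucial point is that $\gamma_C$ lies entirely in $C$ and hence avoids $u$ (as $u\in Z$ while $C\subseteq G-Z$); since $C$ is untouched by $X$ and $u\notin X$, both $\gamma_C$ and the edge joining $u$ to $C$ survive in $G-X$, placing $\gamma_C$ inside $C_u$. Therefore $D$ must hit $\gamma_C$, and the hitting vertex lies in $V(C)$. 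As the components in $\mathcal{C}$ are pairwise vertex-disjoint, these hitting vertices are distinct, so at most $|D|\le d$ components are missed by $X$. Summing the two cases gives $|\mathcal{C}|\le k+d$.

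The step I expect to require the most care is verifying that $\gamma_C$ genuinely lands in $C_u$: this is precisely where the hypothesis \emph{adjacent with $u$} is used, guaranteeing a surviving edge from $u$ into the intact component $C$, and where it matters that $\gamma_C$ does not pass through $u$, so that deleting $u$ (should $u$ happen to lie in $D$) cannot be what destroys the cycle. Once this placement is secured, the disjointness of the components upgrades the single cardinality bound $|D|\le d$ into a bound on the number of cycle-bearing components, and the argument closes.
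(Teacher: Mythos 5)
Your proof is correct and follows essentially the same argument as the paper's: the solution $X$ can meet at most $k$ of the pairwise disjoint cycle-bearing components adjacent to $u$, and each surviving component's cycle lies in the component of $u$ in $G-X$ and must be hit by a distinct vertex of its feedback vertex set of size at most $d$. The paper phrases this as a contrapositive (more than $k+d$ such components forces a no-instance) and is terser, but the counting idea is identical; your direct version just spells out the details (disjointness of the cycles, their placement in $C_u$) more carefully.
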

\begin{proof}
If there are more than $k+d$ components in $G-Z$, which are adjacent to $u$, contains a cycle, then for any vertex set $X\subseteq V(G-Z)$, such that $|X|\leq k$, the component containing $u$ in $G-X$ is not a $d$-quasi-forest. Thus $(G,Z,k)$ is a no instance. It follows that for each vertex $u\in Z$, the sum of the $fvs$ for the components in $G-Z$ that are adjacent to $u$ is at most $k+d$.
\end{proof}
By branching on minimum feedback vertex set of each such component in $G-Z$, i.e. either put the vertex into the solution or put it into $Z$, we may obtain a new instance in which every component in $G-Z$ is  a tree, moreover, the size of $Z$ is upper bounded by $k+1+(k+1)(k+d)=(k+1)(k+d+1)$.

%\textcolor{red}{Consider the measure $\mu(I)= k + cc(Z)-fvs(Z)+\omega'(Z)$ in which $cc(Z)$ is the number of components in $G[Z]$ and $\omega'(Z)$ is the maximum number of components in $G[Z-D]$ where $D$ is a minimum feedback vertex set of $G[Z]$.} Note that initially $|Z|\leq k+1$ and thus $\mu(I)\leq 3k+2$.

The following lemma provides a bound on the number of trees in $G-Z$ that has large neighborhood in $Z$.

\begin{lemma}\label{lemma:smallNeighbor}
Suppose $(G,Z,k)$ is a yes instance of Disjoint $d$-quasi-forest deletion, then there are at most  $2(k+1)+d$ components in $G-Z$ that has at least $d+2$  neighbors in $Z$.
\end{lemma}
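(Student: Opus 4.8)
The plan is to fix a hypothetical solution and show that a \emph{heavy} tree (a component of $G-Z$ with at least $d+2$ neighbours in $Z$) can survive the deletion only in very limited numbers. Let $X$ be a $d$-quasi-forest deletion set with $|X|\le k$ and $X\cap Z=\emptyset$, which exists since $(G,Z,k)$ is a yes instance, and recall that after the preceding branching every component of $G-Z$ is a tree. Since $|X|\le k$, at most $k$ heavy trees contain a vertex of $X$, so it suffices to bound the number of heavy trees $T$ with $V(T)\cap X=\emptyset$, each of which survives intact inside the $d$-quasi-forest $G-X$ and still sees all of its $\ge d+2$ neighbours in $Z$ (as $X\cap Z=\emptyset$).

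For the surviving heavy trees I would pass to a minor. Let $\mathcal{T}'$ be the set of surviving heavy trees, let $H=G[Z\cup\bigcup_{T\in\mathcal{T}'}V(T)]$, and form $H^{*}$ by contracting each $T\in\mathcal{T}'$ to a single vertex $w_T$. The key structural fact is that the class of $d$-quasi-forests is minor-closed: possessing a feedback vertex set of size at most $d$ is preserved under vertex deletion, edge deletion, and edge contraction, and since every component of a minor lies inside a single component of the original graph, the per-component bound $d$ is inherited. As $H$ is a subgraph of the $d$-quasi-forest $G-X$ and $H^{*}$ is a minor of $H$, the graph $H^{*}$ is again a $d$-quasi-forest, living on $Z\cup\{w_T:T\in\mathcal{T}'\}$ with $\{w_T\}$ an independent set (every neighbour of a $w_T$ lies in $Z$) and $\deg_{Z}(w_T)\ge d+2$ for each $T$.

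The counting then takes place inside $H^{*}$. In each component $D$ delete a feedback vertex set $Y_D$ with $|Y_D|\le d$, so that $D-Y_D$ is a forest; since each surviving $w_T$ loses at most $d$ of its $Z$-neighbours to $Y_D$, it keeps at least two, hence has degree at least $2$ in the bipartite forest spanned between the $w_T$'s and $Z\setminus Y_D$. A forest on these two sides has fewer edges than vertices and every $w_T$-edge goes to $Z$, so the number of degree-$\ge 2$ vertices $w_T$ outside $\bigcup_D Y_D$ is at most $|Z|-1\le k$. Together with the at most $d$ vertices $w_T$ that fall into the sets $Y_D$, and the at most $k$ heavy trees meeting $X$, this yields the desired bound $2(k+1)+d$.

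The main obstacle is the bookkeeping across several components of $H^{*}$ and, in particular, controlling those $w_T$ that a per-component feedback vertex set is forced to use (a contracted tree can play the hub of a wheel-like gadget and genuinely enter a minimum feedback vertex set). I would resolve this by exploiting that $Z$ is a \emph{global} feedback vertex set of $H^{*}$: indeed $H^{*}-Z$ is precisely the independent set $\{w_T\}$, hence a forest, so the $Z$-vertices rather than the $w_T$'s carry the cyclomatic weight. Combined with the per-component size bound $d$, this keeps the total correction at $+d$ and delivers the stated $2(k+1)+d$.
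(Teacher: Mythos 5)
Your reduction to the contracted graph $H^{*}$ is sound: $d$-quasi-forests are indeed minor-closed, each contracted vertex $w_T$ keeps its $\ge d+2$ distinct neighbours in $Z$, and the bipartite-forest count for the $w_T$'s that avoid the sets $Y_D$ is correct. The gap is in the final accounting, where you charge \emph{all} contracted vertices lying in $\bigcup_D Y_D$ to a single budget of $d$; this needs $\sum_D |Y_D \cap \{w_T\}| \le d$, which is false when $H^{*}$ has several components. A single component can force up to $d$ of the $w_T$'s into \emph{every} feedback vertex set of size at most $d$: take a component isomorphic to $K_{d+1,\,d+2}$ whose $(d+1)$-side consists of contracted trees and whose $(d+2)$-side lies in $Z$; any FVS of size at most $d$ must consist of $d$ vertices of the $w_T$-side, since keeping two of them would require deleting all but one of the $d+2$ vertices of the $Z$-side. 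Taking $\lfloor (k+1)/(d+2)\rfloor$ disjoint copies of this component (which is a legitimate yes instance, with $X=\emptyset$) makes $\sum_D |Y_D \cap \{w_T\}|$ roughly $d(k+1)/(d+2)$, which exceeds $d$ as soon as $k \ge 2d+3$. Your proposed repair---that $Z$ is a global feedback vertex set of $H^{*}$, so the $w_T$'s carry no cyclomatic weight---does not close this: the constraint is a \emph{per-component} FVS of size at most $d$, while $|Z\cap D|$ may far exceed $d$, so the small FVS can genuinely be forced onto the $w_T$-side, exactly as in the example.

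The gap is repairable inside your framework, but it needs a finer, per-component count rather than a global $+d$ correction: writing $a_D = |Y_D \cap \{w_T\}|$, we have $|Y_D \cap Z| \le d - a_D$, so every surviving $w_T$ in component $D$ keeps at least $a_D+2$ neighbours in $Z\setminus Y_D$, and the forest bound then gives $|W_D \setminus Y_D| \le (|Z_D\setminus Y_D|-1)/(a_D+1)$; combining this with $|Z_D| \ge d+2 \ge a_D+2$ yields $|W_D| \le |Z_D|-1$ for every component, hence at most $|Z|-1 \le k$ surviving heavy trees and at most $2k$ heavy trees in total, which is even slightly stronger than the stated bound. Note finally that this entire route differs from the paper's proof, which never fixes a solution $X$ at all: it tracks the potential $\mu = cc(Z) + d - fvs(Z) + \omega'(Z)$ and argues that absorbing any heavy component into $Z$ decreases $\mu$ (either $cc(Z)$ drops, or $fvs(Z)$ rises, or $\omega'(Z)$ drops), so the number of such components is at most the initial value $2(k+1)+d$.
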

\begin{proof}
Consider the measure $\mu(I)= cc(Z)+d-fvs(Z)+\omega'(Z)$, in which $cc(Z)$ is the number of components in $G[Z]$ and $\omega'(Z)$ is the maximum number of components in $G[Z-D]$ where $D$ is a minimum feedback vertex set of $G[Z]$.
Note that by putting a component $C$ that has at least $d+2$ neighbors in $Z$ into $Z$, either $cc(Z)$ decreases, or $fvs(Z)$ increases, or $\omega'(Z)$ decreases. Indeed, if $fvs(Z)$ does not increase, then $D$ contains no vertex in $C$, moreover, $D$ contains at most $d$ vertices, thus $C$ connects two components in $G[Z-D]$, and so $\omega'(Z)$ decreases. Since $cc(Z)+d-fvs(Z)+\omega'(Z)\leq 2(k+1)+d$, $G-Z$ of any yes instance contains at most $2(k+1)+d$ trees that has at least $d+2$ neighbors in $Z$.
\end{proof}
\begin{lemma}
Let $(G,Z,k)$ be a yes instance of disjoint $d$-quasi-forest deletion, then for each tree in $G-Z$ that has at least $d+2$ neighbors in $Z$, we may partition it into less than $2(2k+d+3)$ subtrees, each has at most $d+1$ neighbors in $Z$, and keep the number of vertices in $Z$ upper bounded.
\end{lemma}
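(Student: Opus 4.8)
The plan is to exploit a witness for the yes-instance to pin down, inside the single tree $T$, a small set of vertices at which the cutting must happen, and then to cut greedily. Fix a solution $X$ with $|X|\le k$ such that $G-X$ is a $d$-quasi-forest, and write $X_T = X\cap V(T)$, so $|X_T|\le k$. For every component $H$ of $G-X$ that meets $V(T)$ fix a feedback vertex set $D_H$ with $|D_H|\le d$, and let $D_T$ be the union of these feedback vertex sets. The decisive structural fact I would use is that once $X$ together with all the $D_H$ is removed, the remaining graph is a forest; in particular the way $T$ hangs onto $Z$ in this forest is acyclic, and this is exactly what forces the attachment of $T$ to $Z$ to organise into only a few independent parts.

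First I would record the building blocks of the partition. Root $T$ at an arbitrary vertex and call a vertex of $T$ a \emph{portal} if it has a neighbour in $Z$, reading ``number of neighbours in $Z$'' of a subtree as its number of portals. I would then perform a single bottom-up sweep: maintaining for the subtree currently being accumulated the set of portals it contains, I carve off that subtree and declare its top vertex a \emph{boundary vertex} as soon as the portal count would exceed $d+1$. By construction every carved piece is a subtree of $T$ with at most $d+1$ portals, the boundary vertices are the only vertices shared between consecutive pieces, and putting the (few) boundary vertices into $Z$ turns the carving into an honest partition of $V(T)$ into subtrees, each with at most $d+1$ neighbours in $Z$.

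The heart of the argument is bounding the number of pieces by $2(2k+d+3)$. Here I would charge the carvings against the witness budget. Removing $X_T\cup D_T$ cuts $T$ into a forest whose pieces attach to $Z-D_T$ without creating any cycle; consequently the portals split into at most $|X_T|+|D_T|+O(1)$ independent blocks, each of which can be swept into a single subtree carrying at most $d+1$ portals. Tracking this count through the sweep — one unit for each of the $\le k$ vertices of $X_T$, one for each feedback vertex of $T$ meeting a relevant component, and a constant number for the root together with the genuinely branching Steiner vertices of the portal set — yields fewer than $2(2k+d+3)$ pieces. Finally, since each piece contributes only its boundary vertex to $Z$ and, by Lemma \ref{lemma:smallNeighbor}, there are at most $2(k+1)+d$ trees to which this operation is applied, the total number of vertices added to $Z$ stays polynomially bounded in $k$ and $d$, which is the ``keep the number of vertices in $Z$ upper bounded'' clause.

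The step I expect to be the main obstacle is making the charging argument honest, that is, turning the acyclicity of $G-X-D_T$ into the clean bound $|X_T|+|D_T|+O(1)$ on the number of independent portal blocks. The delicate point is that a single vertex of $Z$ lying in some $D_H$ may be adjacent to many vertices of $T$, so the naive count of portals is not bounded a priori; one must argue that all portals created this way are separated by at most $O(d)$ feedback vertices and therefore collapse into few blocks, and that the extra boundary vertices introduced by the sweep can be paid for by the $k$ deletions of $X_T$ without double counting. Getting the constants to land exactly at $2(2k+d+3)$, rather than some larger multiple of $k+d$, is where the bookkeeping has to be carried out with care.
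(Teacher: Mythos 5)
Your charging step is where the argument breaks, and it fails for a concrete reason: acyclicity of $H-D_H$ only says that no single vertex of $Z\setminus D_H$ can send two or more edges into one connected piece of $T-(X_T\cup D_T)$; it says nothing about how many \emph{distinct} vertices of $Z$ attach to that piece, nor about how many portal vertices the piece contains. Take $G$ to be a caterpillar: $T$ a path $v_1,\dots,v_m$, $Z=\{z_1,\dots,z_m\}$ an independent set, and $v_iz_i$ an edge for each $i$. Then $G$ is a tree, so $X=\emptyset$ and all $D_H=\emptyset$ witness the yes-instance; your decomposition produces a single ``block'' (all of $T$), yet this block has $m$ portals attached to $m$ distinct vertices of $Z$, so it cannot be swept into one subtree with at most $d+1$ neighbours --- any valid partition of it needs about $m/(d+1)$ pieces (for $d=0$, exactly $m$ singletons). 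So the claimed bound of $|X_T|+|D_T|+O(1)$ blocks, one subtree per block, is false: a forest can have unboundedly many pendant attachments to $Z$ without any cycle, and the witness budget $|X_T|+|D_T|\le k+d$ exerts no control over this. (Two further problems: deleting $j$ vertices from a tree can create far more than $j+O(1)$ components when the deleted vertices have high degree; and you redefine ``number of neighbours in $Z$'' as the number of portals, but these are different quantities --- one portal can see many $Z$-vertices and many portals can see one $Z$-vertex --- and the lemma, as it is used later for neighbourhood types, needs the count of $Z$-vertices.)

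The paper's proof takes the opposite route and never touches the witness $X$: it argues by contradiction from Lemma \ref{lemma:smallNeighbor}. Partition $T$ into inclusion-maximal subtrees with at most $d+1$ neighbours in $Z$; if there were at least $2(2k+d+3)$ of them, then merging adjacent parts (maximality makes each merged pair have at least $d+2$ neighbours in $Z$) would yield $2k+d+3$ vertex-disjoint connected subtrees each with at least $d+2$ neighbours in $Z$, and the measure argument behind Lemma \ref{lemma:smallNeighbor} (the quantity $cc(Z)+d-fvs(Z)+\omega'(Z)\le 2(k+1)+d$ strictly drops each time such a subgraph is moved into $Z$) rules this out. In other words, what bounds the number of pieces is the counting $2(k+1)+d$ coming from that measure, not the $k+d$ deletions and feedback vertices of a solution; any repair of your write-up would have to re-import exactly this counting rather than the acyclicity of $G-X-D_T$.
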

\begin{proof}
%For any large tree with at least $d+2$ neighbors in $Z$, we can partition it into at most $2(2k+d+3)$ trees each has at most $d+1$ neighbors in $Z$.
Let $T$ be a tree in $G-Z$ that has more than $d+1$ neighbors in $Z$. Suppose on the contrary, we can only partition $T$ into at least $2(2k+d+3)$ maximal trees each has at most $d+1$ neighbors in $Z$, then there is a partition of $T$ into $2k+d+3$ smaller trees each has at least $d+2$ neighbors in $Z$(just combining two adjacent subtrees), which is not possible, according to Lemma \ref{lemma:smallNeighbor}. By branching on the boundaries of the at most $2(2k+d+3)$ trees, We reduce the instance into bounded number of new instances, in which each tree has at most $d+1$ in $Z$. Moreover, $Z$ has bounded size, as for each such tree, we put less than $2(2k+d+3)$ vertices into $Z$.
\end{proof}
Now we obtain an instance in which $G-Z$ consists of only trees each has at most $d+1$ neighbors in $Z$, where
$|Z|< (k+1)(k+d+1)+ 2(2k+d+3)(2(k+1)+d)=O(k^2d^2)$. To obtain an FPT algorithm, we further reduce the number of trees in $G-Z$.

\begin{definition}
Two trees $T_1, T_2$ in $G-Z$ have same \textbf{neighborhood type} in $Z$ if $N_{Z}(T_1)=N_{Z}(T_2)$ and for any vertex $u\in N_{Z}(T_1)$, $u$ has only one edge to $T_1$ if and only if $u$ has only one edge to $T_2$.
\end{definition}

\textbf{Reduction Rule 4:} For each neighborhood type $\sigma$, reduce the number of trees in $G-Z$ that have neighborhood type $\sigma$ in $Z$ to $k+d+2$.

\begin{lemma}\label{lemma4}
Reduction Rule 4 is safe.
\end{lemma}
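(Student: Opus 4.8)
The plan is to prove the equivalence in both directions. Write $G$ for the instance before this application of Rule 4 and $G'$ for the instance after it, so that $G'$ arises from $G$ by deleting some trees of neighborhood type $\sigma$, leaving exactly $k+d+2$ of them. Since the rule only removes vertices of $V(G)\setminus Z$, every candidate solution we discuss stays disjoint from $Z$. The forward direction is the easy one: $G'$ is an induced subgraph of $G$, and an induced subgraph of a $d$-quasi-forest is again a $d$-quasi-forest, because deleting vertices and edges cannot increase the feedback vertex number of any component. Hence if $X$ solves $(G,Z,k)$, then $X\cap V(G')$ solves $(G',Z,k)$.

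The content lies in the reverse direction. Let $X'$ solve $(G',Z,k)$, so $|X'|\le k$, $X'\cap Z=\emptyset$, and every component of $G'-X'$ has a feedback vertex set of size at most $d$. I would argue that the \emph{same} set $X'$ already solves $(G,Z,k)$. The first step is a counting observation: $G'$ contains $k+d+2$ trees of type $\sigma$, and as $|X'|\le k$ the set $X'$ meets at most $k$ of them, so at least $d+2$ type-$\sigma$ trees survive intact in $G'-X'$. Because a type-$\sigma$ tree is adjacent to all of $N_Z(\sigma)$ and $Z$ is untouched by $X'$, these surviving trees lie in a single component $C'$ of $G'-X'$, together with $N_Z(\sigma)$. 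Fix a minimum feedback vertex set $D$ of $C'$, with $|D|\le d$.

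The key structural step is to show that $N_Z(\sigma)\setminus D$ consists of at most one vertex $z^{\ast}$, joined to every type-$\sigma$ tree by a single edge. Since $|D|\le d$ while $C'$ holds at least $d+2$ type-$\sigma$ trees, at least two such trees $T_a,T_b$ are disjoint from $D$. If some $z\in N_Z(\sigma)\setminus D$ had two edges to $T_a$, then $T_a\cup\{z\}$ would contain a cycle avoiding $D$; and if two distinct vertices $z_i,z_j\in N_Z(\sigma)\setminus D$ each had an edge to $T_a$ (equivalently to $T_b$), the internally disjoint $z_i$--$z_j$ paths through $T_a$ and through $T_b$ would form a cycle avoiding $D$. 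Either event contradicts that $C'-D$ is a forest, giving the claim. I then transfer $D$ back to $G$: let $C$ be the component of $G-X'$ obtained from $C'$ by reinserting the trees deleted by Rule 4. These are untouched by $X'$ and attach to $C'$ only through $N_Z(\sigma)$, so by the previous step each is joined to $C'-D$ by at most one single edge (to $z^{\ast}$, or to nothing if $N_Z(\sigma)\subseteq D$); adjoining such a pendant tree to the forest $C'-D$ creates no cycle. Hence $C-D$ is a forest and $D$ is a feedback vertex set of $C$ of size at most $d$. All other components of $G-X'$ coincide with components of $G'-X'$, so $G-X'$ is a $d$-quasi-forest and $X'$ solves $(G,Z,k)$.

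I expect the main obstacle to be the structural step bounding $|N_Z(\sigma)\setminus D|$: one must pick the two $D$-free trees correctly and simultaneously rule out multi-edge attachments and two-vertex attachments, and must first verify that the surviving type-$\sigma$ trees genuinely sit in one component together with $N_Z(\sigma)$ before the cycle arguments apply. The counting slack $d+2$ (namely $k+d+2$ minus the $k$ trees $X'$ may touch) is exactly what guarantees two $D$-avoiding trees, so the choice of the bound in Rule 4 is what makes the argument go through.
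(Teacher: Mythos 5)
Your proof is correct and follows essentially the same route as the paper's: count that at least $d+2$ type-$\sigma$ trees survive the solution, deduce that at most one vertex of the common neighborhood $N_Z(\sigma)$ can avoid the feedback vertex set $D$ of the relevant component, use the neighborhood-type definition to conclude that this vertex has only a single edge to each deleted tree, and hence reinsert the deleted trees as pendant attachments without creating cycles. Your version is in fact slightly more careful than the paper's (you make explicit the two $D$-free trees, the multi-edge case, and the pendant-reattachment step), but the decomposition and key ideas are the same.
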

\begin{proof}
Let $(G,Z,k)$ be an instance of disjoint $d$-quasi-forest deletion, such that there are $k+d+3$ trees with same neighborhood type $\sigma$ in $Z$, which are $T_1, T_2,\ldots,T_{k+d+3}$. By deleting $T_{k+d+3}$ we get a new instance $(G',Z,k)$. Let $N \subseteq Z$ be the common neighborhood of the trees in $G-Z$.

To prove the safety of Reduction Rule 4, we need to show that $(G,Z,k)$ is a yes instance if and only if $(G',Z,k)$ a yes instance.

On the one hand, it is easy to see that if $(G,Z,k)$ is a yes instance then $(G',Z,k)$ a yes instance as $G'$ is a subgraph of $G$.

On the other hand, suppose $(G',Z,k)$ is a yes instance. Then there is a set $X\subseteq G-Z$ such that $|X|\leq k$ and each component in $G-X$ admits a feedback vertex set of size at most $d$. Note that at least $d+2$ of $\{T_1, T_2,\ldots,T_{k+d+2} \}$ are disjoint from $X$, and every two vertices in $N$ will be connected by each of these trees. It follows that $G-X$ contains at least $d+2$ vertex disjoint paths between every two vertices in $N$. All vertices in $N$ are in the same connected component $C$ in $G-X$, thus all except one vertex in $N$ will have to be in the feedback vertex set of $C$.

 Let $D$ be a feedback vertex set of $C$ with size at most $d$.  Note that $T_1, T_2,\ldots,T_{k+d+3}$ are of same neighborhood type in $Z$, thus, for any vertex $u\in N$, if $u$  forms a cycle with $T_{k+d+3}$, then it forms a cycle with $T_{i}$, for any $i\in [k+d+2]$. Thus if $N-D$ is not empty, then $T_{k+d+3}$ does not form any cycle with the vertex in $N-D$. And so $D$ is still a feedback vertex set of $C$ when we put $T_{k+d+3}$ back. It follows that $(G',Z,k)$ is also a yes instance.

\end{proof}
For each $M\subseteq Z$, there are $2^{|M|}$ different neighborhood types with neighborhood $M$, depending on whether the number of edges each vertex in $M$ has to the trees is just one.

According to Lemma \ref{lemma4}, after exhaustive applications of Reduction Rule 4, there are at most $k+d+2$ trees of each neighborhood type. %Here we just require that $|M|\leq d+1$.

When each tree in $G-Z$ has at most $d+1$ neighbors in $Z$, there are at most $\sum_{1\leq i\leq d+1}\binom{|Z|}{i}2^i$ different neighborhood types, and for each neighborhood type, Reduction Rule 4 can be applied in polynomial time.

%\textcolor{green}{Regard a large tree as many small trees.} We want to find a subset $S$ of a large tree $T$, such that each tree in $T-S$ is small.

%To reduce redundancy, by replacing the large tree with a smaller one.

%\textbf{Reduction Rule 5:} Let $T$ be a tree in $G-Z$ which has at least $|d+1| f(k,d) $ edges to $Z$, then we may replace $T$ with a smaller tree $T'$ and get a new instance $(G',Z)$.

%We use Gallai Theorem to prove degree bound.
\begin{theorem}(Gallai)
Given a simple graph $G$, a set $R \subseteq V(G)$, and an integer $s$, one can in polynomial time either
\begin{enumerate}
\item find a family of $s +1$ pairwise vertex-disjoint $R$-paths, or
\item conclude that no such family exists and, moreover, find a set $B$ of at most $2s$ vertices, such that in $G \backslash B$ no connected component contains more than one vertex of $R$.
\end{enumerate}
\end{theorem}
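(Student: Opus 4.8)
The plan is to recognize this as the algorithmic form of Gallai's min-max theorem for vertex-disjoint $R$-paths and to prove it by reducing the path-packing problem to maximum matching. First I would record the key reformulation, where by an \emph{$R$-path} I mean a path whose two endpoints are distinct vertices of $R$ and none of whose interior vertices lie in $R$. A set $B$ has the property that no component of $G\setminus B$ contains two vertices of $R$ \emph{if and only if} $B$ meets every $R$-path: if two vertices of $R$ lay in one component of $G\setminus B$, the path joining them would contain an $R$-path avoiding $B$; conversely, an $R$-path disjoint from $B$ would place its two $R$-endpoints in the same component. Hence the dichotomy to establish is exactly: either there are $s+1$ vertex-disjoint $R$-paths, or there is an $R$-path transversal of size at most $2s$.

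Next I would let $\nu$ denote the maximum number of vertex-disjoint $R$-paths and compute it in polynomial time by reducing to maximum matching in an auxiliary graph $H$: split each non-terminal vertex $v$ into two adjacent copies $v',v''$ to enforce that $v$ is used at most once, keep a single copy of each terminal, and wire the edges of $G$ between the two layers so that an $R$-path corresponds to an augmenting/matched structure running from one terminal through paired internal copies to another terminal. Running Edmonds' blossom algorithm yields $\nu$, and the Tutte--Berge formula together with the Gallai--Edmonds decomposition yields a witness that translates back to a set $U\subseteq V(G)$ attaining Gallai's minimum
\[ \nu \;=\; |U| \;+\; \sum_{C}\Bigl\lfloor \tfrac{|(R\setminus U)\cap C|}{2}\Bigr\rfloor, \]
the sum ranging over the connected components $C$ of $G\setminus U$.

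If $\nu\ge s+1$, I read off $s+1$ pairwise vertex-disjoint $R$-paths directly from the matching and output the first alternative. Otherwise $\nu\le s$, and I build the transversal from the minimizer $U$: take $B=U$ together with, inside each component $C$ of $G\setminus U$, all but one vertex of $(R\setminus U)\cap C$. By construction every component of $G\setminus B$ then contains at most one vertex of $R$. Writing $a_C=|(R\setminus U)\cap C|$, the number of extra vertices added is $\sum_C \max(a_C-1,0)\le \sum_C 2\lfloor a_C/2\rfloor$, so $|B|\le |U|+2\sum_C\lfloor a_C/2\rfloor\le 2\bigl(|U|+\sum_C\lfloor a_C/2\rfloor\bigr)=2\nu\le 2s$, which is the second alternative.

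The step I expect to be the main obstacle is the matching reduction and the proof of Gallai's min-max equality itself: designing the gadget so that maximum matchings in $H$ correspond \emph{precisely} to maximum $R$-path packings, and then certifying that the Tutte--Berge deficiency set produces a minimizer $U$ realizing the displayed value. A naive transversal such as ``the endpoints of a maximum packing'' provably fails (for instance when many $R$-paths share a single internal cut vertex), so the internal-vertex information carried by $U$ is genuinely needed. Once the formula and its witness $U$ are in hand, the separation property of $B$ and the bound $|B|\le 2s$ are routine, as shown above.
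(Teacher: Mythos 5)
The first thing to note is that the paper contains no proof of this statement to compare yours against: it is Gallai's classical theorem on vertex-disjoint $R$-paths, quoted in the algorithmic form standard in the parameterized-complexity literature, and it is used later in the paper purely as a black box. Judged on its own terms, your proposal follows the classical route, and its superstructure is correct and complete: the equivalence between ``$B$ meets every $R$-path'' and ``no component of $G\setminus B$ contains two vertices of $R$'' is right; your statement of Gallai's min-max formula is the correct one; and your derivation of the dichotomy from that formula is exactly the standard argument --- in particular the estimate $\max(a_C-1,0)\le 2\lfloor a_C/2\rfloor$, the chain $|B|\le |U|+2\sum_C\lfloor a_C/2\rfloor\le 2\nu\le 2s$, and the separation property of $B$ are all correct. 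Your remark that the endpoint set of a maximum packing fails as a transversal (a star whose center lies outside $R$ already defeats it) correctly identifies why the min-max witness $U$ is indispensable.

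The gap is the one you flag yourself, and it is not a technicality: the matching gadget. The phrase ``wire the edges of $G$ between the two layers'' does not pin down a construction, and the two claims you need from it --- that maximum matchings of the auxiliary graph $H$ yield maximum $R$-path packings constructively, and that a Tutte--Berge/Gallai--Edmonds witness for $H$ translates back into a minimizer $U$ of Gallai's formula --- are asserted rather than proven. For the record, the standard construction (due to Gallai; see Schrijver's treatment) keeps each $r\in R$ as a single vertex, takes two adjacent copies $v',v''$ of each $v\notin R$, adds $u'v'$ and $u''v''$ for each edge $uv$ of $G-R$, adds $rv'$ and $rv''$ for each edge $rv$ with $r\in R$, $v\notin R$, and keeps edges with both ends in $R$; one then proves that the maximum number of disjoint $R$-paths equals $\nu(H)-|V(G)\setminus R|$. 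Both directions of that equality, plus the translation of the deficiency witness into $U$, constitute essentially the entire technical content of the theorem, since the statement being proven is equivalent to the constructive min-max result you defer to matching theory. So: sound approach, correct outer argument, but as submitted it is an outline whose core remains a citation rather than a proof.
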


%\textcolor{red}{Try to prove that for any subset $N_0$ of $N$, there is no need to have more than $k+d+2$ vertices each has $N_0$ as its neighborhood in $Z$. If this is true, maybe we can replace $T$ with a tree with at most $2^{|N|}(k+d+2)$ vertices.}

%\begin{lemma}\label{lemma5}
%Reduction Rule 5 is safe.
%\end{lemma}
%\begin{proof}
%As the tree has $|d+1| f(k,d) $ edges to $Z$, if the tree has small neighborhood $N$ in $Z$, then at least one vertex in $Z$ has at least $f(k,d)$  edges to the tree. \textcolor{red}{We require that $T'$ is the same with $T$ for those vertices in $N$ with small degree.}
%\end{proof}

\begin{definition}
%We called a vertex $u\in Z$ \textbf{\textit{forced}} if $u$ must be in every feedback vertex set of size at most $d$ of the component containing $u$ in $G-X$ for any solution $X$.

We called a vertex $u\in Z$ \textbf{\textit{forced}} if for every solution $X$, every feedback vertex set of size at most $d$ of the component containing $u$ in $G-X$ must contain $u$.
\end{definition}

%\textcolor{red}{introduce some notation to simplify the description.}
Note that a vertex $u\in Z$ is a forced vertex if $G-Z$ contains $k+d+1$ vertex disjoint paths between neighbors of $u$.
Thus we may set $s=k+d$, and then for each vertex  $u\in N_Z(T)$, let $R=N_{T}(u)$, check whether $u$ is a forced vertex via Gallai's Theorem.
 %There are bounded number of trees in $G-Z$ when Reduction Rule 4 is not applicable. For each large tree, we do branches to get smaller trees. Hopefully, we don't need to do it many times, such that the size of $Z$ is always bounded.

%\textcolor{red}{bounded number of trees with small neighborhood, but the trees does not have bounded size $\rightarrow$  bounded number of small trees. }

We already show that the number of trees in $G-Z$ is upper bounded. We may now guess which trees are intersecting the solution, note that there are at most $k$ such trees. For each guess, we check whether it is compatible, i.e. does putting all the trees (that are guessed to be disjoint with the solution) into $Z$ violate the requirement of $d$-quasi-forest.

We guess how the components in $Z$ are going to be connected in $G-X$. For each compatible guess, we need to delete vertices from the trees, such that, the components are separated accordingly. That is, we need to delete vertices from the trees, such that, some components are not going to be connected.

There are bounded number of vertices in $Z$ that are neighbors of $G-Z$. We can compute the size of minimum feedback vertex set for each component in $G[Z]$. %We can afford to guess whether two components are going to be connected by the trees. We can afford to guess how many vertices in each tree are in the solution.
We may also guess for each vertex in the neighborhood of $G-Z$ in $Z$, whether it is in the feedback vertex set or not.

We label each component with the size of its minimum feedback vertex set.

%\textcolor{red}{One thing to be careful is that, the number of trees will increase after we do the decomposition, so we need to apply reduction rule 4 once again.}

%For each compatible, we need to decide whether we can delete at most $d_i$ vertices from $T_i$, such that it does not connect two components which we guess to be separated.

%Now we need solve the instance in which each tree has small neighborhood in $Z$, $Z$ has bounded size and the number of trees in $G-Z$ is upper bounded. We may try the following idea of guessing intersection of each tree with the solution. We may obtain bounded number of instances in which there are at most $k$ trees in $G-Z$, each tree has small neighborhood in $Z$.

%The trees has small neighborhood in $Z$, thus there are two types of effects of each tree, one is connecting components in $Z$, and the other is forming cycles with its neighbors in $Z$.

%We may afford to guess which vertices are in same component for vertices in $N$. Just consider those guesses that are compatible with the current $G[Z]$. First computing the minimum feedback vertex set size of each component in $G[Z]$. For each vertex in $N$, we may also guess whether it is in the feedback vertex set.

\begin{lemma}
The problem reduces to bounded number of instances in which $G-Z$ consists of at most $k$ trees, each has at most $d+1$ neighbors in $Z$, moreover, each tree contains at most $k\binom{p}{2}+k|N_2|+(k+1)(k\binom{p}{2}+k|N_2|)$ vertices, where $p\leq k+1$.
\end{lemma}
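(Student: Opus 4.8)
The plan is to package three separate reductions into the single statement: bounding the number of trees in $G-Z$ to $k$, recalling the per-tree neighborhood bound, and finally bounding the size of each surviving tree. The first two are essentially in hand from the preceding development, so the real work is the size bound. First I would bound the number of trees. After exhaustive application of Reduction Rule 4 the total number of trees in $G-Z$ is at most $(k+d+2)$ times the number of neighborhood types, which is itself bounded since $|Z|=O(k^2d^2)$ and each tree has at most $d+1$ neighbors. Because $X$ is disjoint from $Z$ and $|X|\le k$, the solution meets at most $k$ of these trees. I would therefore branch over all choices of the (at most $k$) trees that $X$ is permitted to intersect; as the total number of trees is bounded, this yields only a bounded number of branches. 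In each branch every unselected tree is disjoint from $X$, so I move it into $Z$ after checking that $G[Z]$ stays a $d$-quasi-forest (the compatibility test described above), discarding any branch that fails. What remains is an instance with at most $k$ trees in $G-Z$. Since distinct components of $G-Z$ are non-adjacent, absorbing trees into $Z$ does not change the neighborhood in $Z$ of any surviving tree, so each surviving tree still has at most $d+1$ neighbors in the enlarged $Z$.

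Next I set up the data used to shrink a single surviving tree $T$. Let $p$ denote the number of components of $G[Z]$ whose connection pattern we guess, which the iterative-compression setup bounds by $k+1$. I guess how these $p$ components are to be connected in $G-X$ and, for each vertex of $Z$ incident to $G-Z$, whether it lies in the feedback vertex set of its component; each family of guesses is bounded in number. I then partition $N_Z(T)$ into $N_1$, the vertices with a single edge to $T$, and $N_2$, those with at least two edges to $T$; the vertices of $N_2$ are exactly those that can close a cycle using $T$ alone.

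For the size bound I would mark a bounded set of relevant vertices of $T$ and prune everything else. Observe first that, because $G$ has minimum degree at least $3$ after Reduction Rules 1--2, every leaf and every degree-two vertex of $T$ is incident to $Z$; hence $|V(T)|$ is controlled by the attachment points of $T$ to $Z$. The marking is driven by the solution: for each of the $\binom{p}{2}$ pairs of $Z$-components that $T$ might merge I mark at most $k$ candidate cut vertices along the connecting path in $T$, and for each $u\in N_2$ I mark at most $k$ candidate vertices for destroying the cycles through $u$, giving at most $A:=k\binom{p}{2}+k|N_2|$ marked vertices. Taking the minimal subtree of $T$ spanning the marked vertices together with the attachment points, and using that every suppressed degree-$\le 2$ vertex attaches to $Z$, the retained part of $T$ consists of the $A$ marked vertices together with at most $(k+1)A$ further internal attachment vertices, for a total of $A+(k+1)A$, which is the claimed bound. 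The pruned part of $T$ is then removed by an exchange argument analogous to Reduction Rules 1 and 4.

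The main obstacle is the correctness of this final pruning: I must show that replacing $T$ by its retained subtree yields an equivalent instance. Deleting a vertex of $T$ affects two things at once — it can separate components of $G[Z]$ and it can destroy cycles through an $N_2$-vertex — so the marking has to preserve both the separation structure (for the connectivity guess) and the cyclic structure (for the feedback-vertex-set bound $d$) simultaneously. The crux is to argue, via the forced-vertex notion and Gallai's Theorem applied with $s=k+d$, that any solution using a pruned vertex can be rerouted onto a marked one without increasing its size or changing which pairs of $Z$-components are separated; once this exchange is established for both roles at the same time, safety of the reduction, and hence the lemma, follows.
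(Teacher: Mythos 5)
Your proposal has a genuine gap at exactly the point you flag as ``the crux,'' and it also diverges from the paper in a way that makes that gap harder to close. First, your partition of $N_Z(T)$ is wrong for this purpose: you define $N_1$ as the vertices with a single edge to $T$ and $N_2$ as those with at least two edges. The paper's $N_1/N_2$ is a different partition, computed \emph{upfront} via Gallai's Theorem with $s=k+d$: $N_1$ is the set of \emph{forced} vertices (those with $k+d+1$ vertex-disjoint paths in $G-Z$ between their neighbors, which must lie in every small feedback vertex set of their component in $G-X$), and $N_2$ is the rest, for each of which Gallai returns a hitting set $B_u$ of at most $2(k+d)$ vertices. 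The paper then branches on the vertices of each $B_u$, so that in every resulting instance each tree contains at most one neighbor of each vertex of $N_2\cup(\cup_{u\in N_2}B_u)$; after that, no cycle through a non-forced vertex survives inside a single tree, and cycles through forced vertices are irrelevant because those vertices are deleted by any valid feedback vertex set anyway. This is precisely the machinery that lets the paper prune long paths of ``uninteresting'' tree vertices safely: the pruned vertices only connect vertices guessed to lie in the same component, and their $Z$-neighbors are forced, so removing them changes neither the separation structure nor the feedback-vertex-set requirement.

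In your version this machinery is absent, so the pruning step cannot be completed as stated. Your claim that for each $u\in N_2$ (in your sense: $\ge 2$ edges to $T$) it suffices to ``mark at most $k$ candidate vertices for destroying the cycles through $u$'' is unsupported and not obviously true: a vertex with many edges into $T$ can create many cycles, and whether they can be handled by few tree deletions or must instead be handled by putting $u$ itself into the feedback vertex set is exactly the forced/non-forced dichotomy you never establish. Deferring Gallai's Theorem to a final, unproven exchange argument (``once this exchange is established \ldots the lemma follows'') leaves the central claim of the lemma --- the per-tree size bound of $k\binom{p}{2}+k|N_2|+(k+1)\bigl(k\binom{p}{2}+k|N_2|\bigr)$ --- without a proof. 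Your other ingredients (branching over which trees meet $X$ and absorbing the rest into $Z$, the bound of $k$ common neighbors per pair of $Z$-components guessed to be separated, and the $(k+1)$ factor for attachment vertices along paths) do match the paper's argument; the fix is to reorder the proof so that Gallai's Theorem is applied first, defining $N_1$ as the forced vertices and branching on the sets $B_u$ before any marking, after which the pruning becomes the routine step the paper makes it.
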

\begin{proof}

 Since $G-Z$ contains at most $k$ trees, we know that $N=N_{Z}(G-Z)$ contains at most $k(d+1)$ vertices. For each vertex $u\in N$, we decide whether it is a forced vertex via Gallai's Theorem. Thus we obtain a partition of $N=N_1\cup N_2$, where $N_1$ contains all the forced vertices. According to Gallai's Theorem, for each vertex $u\in N_2$, we can find a set $B_u$ with at most $2(k+d)$ vertices, such that each component in $G-Z-B_u$ contains at most one neighbor of $u$. Thus, by branching on whether each vertex in $B_u$ is in the solution, we may partition each tree $T_i$ in $G-Z$ into subtrees, which contains at most one neighbor of each vertex in $B_u\cup N_2$.

We apply reduction rule 4  again to reduce the number of trees. And then we further reduce the instance by guessing how the trees are going to intersecting  with the solution. Easy to know that, there are at most $k$ trees in the reduced instance, moreover, each tree in $G-Z$ contains at most one neighbor of each vertex in $N_2\cup (\cup_{u\in N_2}B_u)$.

 Now we show how to bound the size of each tree.  We guess how the component are going to be separated in $G-X$ where $X$ is a minimum $d$-quasi-forest deletion set of $(G,Z,k)$. There are bounded number of such guesses. And then, we check whether the guessing can be realized. To simplify, we may regard all vertices in the same component from $N_1$ as one vertex. For two vertices $u,v\in N_1$ that are guessed to be in different components, there are at most $k$ vertices in the trees that are adjacent to both $u$ and $v$, since otherwise, deleting $X$ cannot separate $u$ and $v$.  Thus the total number of vertices that have neighbors in two guessed components must be upper bounded. Suppose we guessed there to be $p$ components, then at most $\binom{p}{2}$ pairs of vertices in the components. It follows that there are at most $k\binom{p}{2}$ vertices in the trees that have neighbors in different guessed components.

 And we also know that in each tree, the number of vertices that are adjacent to $N_2\cup (\cup_{u\in N_2}B_u)$ is upper bounded, since each tree contains at most one neighbor of each vertex in $N_2\cup (\cup_{u\in N_2}B_u)$. Thus we only need to bound the number of vertices that are not adjacent to $N_2\cup (\cup_{u\in N_2}B_u)$ and only have neighbors in just one component. We bound this by arguing that there is no need to keep too many such vertices in the same tree, since such vertices only connecting vertices that are guessed to be in the same components, moreover, their neighbors in $Z$ are forced to be in the feedback vertex sets, thus the number of such vertices does not affect the solution. In each tree, for each path connecting neighbors  of $N_2\cup (\cup_{u\in N_2}B_u)$ and vertices with neighbors in more than one components, we just keep at one vertex that are adjacent to one component in $Z$(note that there are at most $k+1$ components).

 Thus we obtain a bound on the size of each tree in $G-Z$, moreover, there are at most $k$ tree. Each tree in the reduced instance will have at most $k\binom{p}{2}+k|N_2|+(k+1)(k\binom{p}{2}+k|N_2|)$ vertices. And so we can solve the reduced instance in fpt time by branching on the vertices in $G-Z$, which has bounded number of vertices.

\end{proof}

\begin{theorem}
The $d$-quasi-forest deletion problem can be solved in time \\$O^*({{c_1}^{(k+d)}}^{c_2(k+d)})$.
\end{theorem}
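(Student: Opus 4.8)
The plan is to assemble the machinery developed throughout this section into a single iterative-compression procedure and then tally the running time contributed by each branching step. First I would set up the standard iterative-compression framework: process the vertices of $G$ one at a time, maintaining at step $i$ a $d$-quasi-forest deletion set $Z$ of size at most $k+1$ for the graph induced on the first $i$ vertices. Passing from one step to the next, I guess the intersection $Z\cap X$ of the current witness with the sought solution (at most $\sum_{j}\binom{k+1}{j}\le 2^{k+1}$ subsets), delete that part, and solve the \textbf{Disjoint $d$-quasi-forest deletion} instance on the remainder. Since there are $n$ iterations and at most $2^{k+1}$ guesses each, the total cost is $2^{k+1}n^{O(1)}$ times the cost of the disjoint version, so it suffices to bound the latter.

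For the disjoint version I would run the reductions and branchings in the order already established, recording the branching factor each contributes. Apply Reduction Rules 1--3 in polynomial time to reach minimum degree three and to ensure $G[Z\cup\{u\}]$ is a $d$-quasi-forest for every $u$. Next, branch on the minimum feedback vertex sets of the cycle-containing components of $G-Z$ adjacent to $Z$ so that every component of $G-Z$ becomes a tree while $|Z|$ stays bounded; then use Lemma~\ref{lemma:smallNeighbor} and the subsequent partition lemma to cut the trees with many neighbors into subtrees having at most $d+1$ neighbors, which raises $|Z|$ to $O(k^2d^2)$ at the cost of only a bounded number of branches. Apply Reduction Rule 4 to cap the number of trees of each of the $\sum_{i\le d+1}\binom{|Z|}{i}2^i$ neighborhood types at $k+d+2$, invoke Gallai's theorem with $s=k+d$ to split $N=N_Z(G-Z)$ into the forced vertices $N_1$ and the remainder $N_2$ with separators $B_u$, and branch on the vertices of $\cup_{u\in N_2}B_u$ and on which trees meet the solution. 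By the final lemma this leaves at most $k$ trees, each of size bounded by a function of $k$ and $d$, so a brute-force search selecting at most $k$ deletion vertices finishes the instance.

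The running time is then the product of all these branching factors. The dominant contributions come from the number of neighborhood types, the branching over $\cup_{u\in N_2}B_u$ (whose total size is $O(|N_2|(k+d))$), and the final selection of at most $k$ vertices from the bounded trees; each is of the form $c^{O((k+d)^2)}$, and multiplying them with the $2^{k+1}$ iterative-compression overhead yields the claimed bound ${{c_1}^{(k+d)}}^{c_2(k+d)}n^{O(1)}$.

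I expect the main obstacle to be precisely this bookkeeping. Each intermediate lemma bounds one structural quantity ($|Z|$, the number of trees, the number of neighborhood types, the tree sizes), but these bounds interact multiplicatively across the branching tree, and one must verify that every branching exponent stays within $O((k+d)^2)$ rather than blowing up when the partition and Gallai steps inflate $|Z|$ to $O(k^2d^2)$. The delicate point is confirming that the $\sum_{i\le d+1}\binom{|Z|}{i}2^i$ neighborhood types and the separator-based branching on $\cup_{u\in N_2}B_u$ do not push the exponent beyond quadratic in $k+d$; once the per-step factors are pinned down, the final estimate reduces to a routine product.
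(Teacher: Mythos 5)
Your proposal follows essentially the same route as the paper: iterative compression with at most $2^{k+1}$ guesses of $Z\cap X$, then the disjoint version solved by exactly the pipeline of this section (branching on feedback vertex sets to make $G-Z$ acyclic, partitioning trees with many neighbors, Reduction Rule~4 on neighborhood types, the Gallai-based split of $N$ into forced vertices $N_1$ and the rest $N_2$ with separators $B_u$, and a final brute force on at most $k$ bounded-size trees), with the total time obtained as the product of the per-step branching factors. One bookkeeping caveat: your assertion that every dominant factor is of the form $c^{O((k+d)^2)}$ does not match the paper's own tally — the final brute force acts on up to $k$ trees each with $O(k^4d)$ vertices, costing $2^{O(k^5d)}$, so the exponents are polynomial in $k+d$ of degree well above two; this still fits within the theorem's generous bound $O^*({{c_1}^{(k+d)}}^{c_2(k+d)})$, but the quadratic-exponent claim itself should be dropped rather than verified.
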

\begin{proof}
We solve the Disjoint $d$-quasi-forest deletion problem as a subroutine. It takes at most $O^*((2^{k+d})^{k+1})$ steps to make $G-Z$ acyclic. And it takes at most $O^*({2^{2(2k+d+3)}}^{2k+2+d}) $ steps to apply lemma 6.
It take $\sum_{1\leq i\leq d+1}\binom{|Z|}{i}2^i$  steps to exhaustively apply Reduction Rule 4.
It takes  at most $2^{2k+d+2}$  guesses to get an instance in which there are at most $k$ trees in $G-Z$.
For each instance, in which there are at most $k$ trees each has at most $d+1$ neighbors in $Z$, we can reduce it into bounded number of instances, where each tree has bounded size. And finally, we just need to solve instances in which $G-Z$ contains at most $k$ trees, each has at most $k\binom{p}{2}+k|N_2|+(k+1)(k\binom{p}{2}+k|N_2|)=O(k^4d)$ vertices. Thus it takes at most $O^*(2^{k^5d})$ steps to solve each such instance via branching. According to the above analysis, the running time of our algorithm for Disjoint $d$-quasi-forest deletion is at most $(2^{k+d})^{k+1} {2^{2(2k+d+3)}}^{2k+2+d} \sum_{1\leq i\leq d+1}\binom{|Z|}{i}2^i 2^{2k+d+2} 2^{k^5d} = O^*({{c_1}^{(k+d)}}^{c_2(k+d)})$. To solve $d$-quasi-forest deletion, it suffices to solve $2^{|Z|}\leq 2^{k+1}$ copies of Disjoint $d$-quasi-forest deletion, thus, we may solve $d$-quasi-forest deletion in time  $O^*({{c_1}^{(k+d)}}^{c_2(k+d)})$.
\end{proof}

\section{Conclusion}
In this paper, we provide FPT results for two generalized versions of feedback vertex set problem.
It would be interesting to know whether the problem of $d$-quasi-forest deletion admits a polynomial kernel.
\section{Acknowledgement}
The research of Bin Sheng is supported by National Natural Science Foundation of China (No. 61802178).
\bibliographystyle{plain}
\cleardoublepage
\bibliography{references}
\end{document}